\documentclass[11pt, a4paper]{article}

\usepackage[utf8]{inputenc}
\usepackage[margin=1in]{geometry}

\usepackage{amsmath}
\usepackage{amsfonts}
\usepackage{nicefrac}
\usepackage{rotating}
\usepackage{makecell}
\usepackage{multirow}
\usepackage[colorlinks=true, allcolors=blue]{hyperref}

\usepackage[oldcommands, ruled]{algorithm2e}
\SetKwFor{For}{for}{do}{}
\SetKwFor{While}{while}{do}{}

\usepackage{amsthm}
\theoremstyle{definition}
\newtheorem{defi}{Definition}[section]
\newtheorem{theorem}[defi]{Theorem}
\newtheorem{lemma}[defi]{Lemma}
\newtheorem{example}[defi]{Example}
\newtheorem{remark}[defi]{Remark}

\newcommand{\st}{\mathbin{|}}
\newcommand*{\calJ}{\mathcal J}

\usepackage{orcidlink}

\begin{document}

\author{Felix Buld\orcidlink{0009-0007-0230-8024}\thanks{Operations Research, Technical University of Munich, Arcisstr. 21, 80333 Munich, Germany} \and Andreas S. Schulz\orcidlink{0000-0002-9595-459X}$\footnotemark[1]$}

\title{Scheduling with Testing: Competitive Algorithms for Minimizing the Total Weighted Completion Time in the Adversarial Model}
\maketitle              
\begin{abstract}
We study scheduling with testing on a single machine and on identical parallel machines to minimize the total \emph{weighted} completion time in the adversarial model. In this setting, each job is equipped with a weight, an upper bound on its processing time, and a testing time. An algorithm can either execute a job for an amount of time equal to the upper bound or test it first to reveal a potentially lower processing time used to schedule the job later. 

We establish the first constant-competitive algorithms for this problem with job-depen\-dent weights that reflect each job's relative importance. For single-machine scheduling, we present a deterministic algorithm with a competitive ratio of 2.3166 and show that a randomized variant has a competitive ratio of 2.1523. These guarantees match the best-known upper bounds in the unweighted setting. Combining these algorithms with list scheduling yields competitive ratios of 2.7763 and 2.5110 for identical-parallel-machine scheduling, improving the previously best-known bounds even in the unweighted case.\footnote{A conference version of this paper \cite{buld2025scheduling} appeared in the proceedings of IJTCS-FAW 2025. This is the full version, including complete proofs and additional details throughout the paper.}
\end{abstract}
    
\quad\small{\textbf{Keywords:} Testing $\cdot$ Explorable Uncertainty $\cdot$ Online Algorithms $\cdot$  
\\\hspace*{8.25em}
Scheduling $\cdot$ Competitive Analysis $\cdot$ Weighted Total Completion Time}

\section{Introduction}
An emerging research branch in scheduling and beyond considers \emph{explorable uncertainty}, where uncertainty in data can be reduced through actions such as conducting tests. However, these actions typically come at a cost, such as money or time. The problem considered here, \emph{scheduling with testing}, belongs to this category. We are given jobs that must be assigned to machines, servers, or personnel over time. The processing times of jobs are initially unknown but can be determined through additional time-consuming individual tests. Therefore, one has to balance the benefit of additional certainty against the cost incurred from exploration. This framework is motivated by various real-world applications, see, e.g.,  \cite{albers2021explorable,durr2020adversarial,levi2019scheduling} for comprehensive overviews. One prominent example is \emph{medical care}, where patients are tested to determine their needed treatment. For instance, \cite{sun2018patient} considers emergency situations following mass casualty events, where limited medical personnel handle both the testing and the treatment. Another example is \emph{fault diagnosis} in \emph{maintenance}, where tests are conducted to explore the duration of repairs. In a further application, \emph{code optimizers} can be used in advance to potentially reduce the runtime of computer programs.

\subsection{Problem Description}
\label{subsec: Problem Description}
We examine the following adversarial scheduling with testing problem that is a mixture of both online and robust optimization frameworks and was introduced in \cite{durr2020adversarial}. We are given $n$ jobs to be scheduled on a single machine. Each job $j$ has a weight $w_j > 0$ expressing its relative importance, which is known in advance. There are two possible processing options per job. The first is to execute the job untested in a safe mode, requiring $u_j>0$ time. The second option is to test a job, taking~$t_j> 0$ time, which reveals the actual processing time $p_j\in \left[ 0, u_j\right]$, which is initially unknown. The job can then be executed taking~$p_j$ time at any later point. The completion time of a job~$j$ in a schedule is denoted as~$C_j$. The objective is to minimize the total weighted completion time, $\sum_j w_j C_j$. It is important to note that at any point in time the machine can test or process at most one job at a time, i.e., testing requires the same capacity as processing.

The classic counterpart--without testing--is the scheduling problem known as $1 \st \st  \sum w_j C_j$, which considers~$n$ jobs with known weights $w_j$ and processing times~$p_j$. This problem is optimally solved by  \emph{Smith’s rule}, aka the \emph{Weighted Shortest Processing Time First} (\textsc{WSPT}) rule~\cite{Smith1956WSPT}. That is, in an optimal sequence jobs appear in order of non-decreasing ratios~$\nicefrac{p_j}{w_j}$. This result holds even if job preemptions were allowed, since these can be iteratively eliminated without increasing the objective value of the schedule under consideration, see, e.g., \cite{brucker2007scheduling}. Furthermore, the objective value decreases monotonically as each job's processing time decreases. If the processing times~$p_j$ were fully known in the setting with testing, an optimal solution would therefore be obtained by simply testing all jobs with~$t_j + p_j \leq u_j$ and processing the jobs in non-decreasing order of~$\nicefrac{1}{w_j} \cdot \min\lbrace t_j + p_j, u_j\rbrace$, ensuring that the execution of a tested job directly follows its test. However, in the model considered here, a job’s processing time~$p_j$ is only revealed after testing. The goal is to devise a (randomized) $\gamma$-competitive algorithm whose outcome--for any problem instance--has an (expected) objective value at most $\gamma$ times the objective value of an optimal solution that would be achievable with full information. For an instance~$I$ of the problem setting at hand, let $ALG(I)$ and $OPT(I)$ denote the obtained objective values of an algorithm under consideration, and an optimal offline solution, respectively. If the instance is clear from the context, we may abbreviate these by $ALG$ or $OPT$.

\subsection{Related Work}
\label{subsec: related work}
The problem with $w_j=1$ for all jobs $j$ has been studied under three different settings: preemptive, test-preemptive, and non-preemptive, as discussed in~\cite{albers2021explorable}. Accordingly, the (testing and) processing of a job may be interrupted at any point in time and resumed later, or only between testing and processing, or not at all. The most commonly considered setting is the test-preemptive one that we will always consider, unless stated otherwise. This work introduces the first competitive algorithms for this scheduling problem with arbitrarily given weights~$w_j > 0$. 
{\small
\begin{table}[htp]
    \centering
    \caption{Research on scheduling with testing to minimize the total (weighted) completion time in the adversarial model.}
    \label{tab:testing_literature}
    \vspace{2ex}
    \begin{tabular}{c|c|c|c|c}
         Model&  Objective & References  & Testing Time & Hidden Information \\
         \hline
         \multirow{4}{*}{Adversarial} & \multirow{2}{*}{$\sum_j C_j$}  &\cite{durr2020adversarial}   &Uniform & \multirow{2}{*}{$p_j\in \left[ 0, u_j\right]$}\\
         &&\cite{albers2021explorable,gong2024approximation,liu2023power,liu2024parallel} &Job-dependent & \\
         \cline{2-5}
         & \multirow{2}{*}{$\sum_j w_j C_j$} &\cite{zhang2023scheduling}  &Uniform&$p_j\in \left[ 0, u\right]$\\
         && This work   &Job-dependent& $p_j\in \left[ 0, u_j\right]$ 
    \end{tabular}
\end{table}
}

Scheduling with testing has so far also been explored in various settings, including stochastic optimization variants \cite{levi2019scheduling,levi2024scheduling,sun2018patient}, speed scaling \cite{bampis2021speed}, makespan minimization on parallel machines \cite{albers2021multiplemachines,gong2026randomized, gong2025multiprocessor,gong2022improved}, processing time oracles \cite{dufosse2022scheduling}, testing budgets \cite{damerius2023scheduling}, obligatory testing \cite{dogeas2024obligatory,liang2026completion}, and flow time minimization~\cite{lindermayr2026online}.

Table~\ref{tab:testing_literature} provides an overview of research on the model from Subsection~\ref{subsec: Problem Description}. Early research on the single-machine variant focused on uniform testing times \cite{durr2020adversarial}, but this work builds on the more general framework including general job-dependent testing times, as in \cite{albers2021explorable,gong2024approximation,liu2023power,liu2024parallel}.
Table~\ref{tab:bounds on a single machine} summarizes the achieved bounds for this model on a single machine, where the \emph{Golden Ratio} is denoted by $\varphi := \nicefrac{\left( 1 + \sqrt{5}\right)}{2}\approx 1.618$. Objectives studied include the makespan $C_{\max}:=\max_j C_j$ and the total (weighted) completion time.
{\small
\begin{table}[htb]
    \centering
    \caption{Bounds for adversarial single-machine scheduling with testing.}
    \label{tab:bounds on a single machine}
    \vspace{2ex}
    \begin{tabular}{c|c|c|c|c}
         Objective & Type of Algorithm & \makecell{Lower Bound \\ for $t_j=1$} & \makecell{Upper Bound \\ for $t_j=1$} & \makecell{Upper Bound \\ for $t_j$ arbitrary}  \\
         \hline
         \multirow{2}{*}{$C_{\max}$} & Deterministic  & $\varphi$  \cite{durr2020adversarial} & $\varphi$ \cite{durr2020adversarial} & $\varphi$ \cite{albers2021explorable}\\
         & Randomized & $\nicefrac{4}{3}$ \cite{durr2020adversarial} & $\nicefrac{4}{3}$ \cite{durr2020adversarial} & $\nicefrac{4}{3}$ \cite{albers2021explorable}\\
         \hline
         \multirow{2}{*}{$\sum_j C_j$} & Deterministic & $1.8546$ \cite{durr2020adversarial} & $2$ \cite{durr2020adversarial}& $2.3166$ \cite{liu2023power}\\
          & Randomized & $1.6257$ \cite{durr2020adversarial} & $1.7453$ \cite{durr2020adversarial}& $2.1523$ \cite{liu2023power}
    \end{tabular}
\end{table}
}
\\Minimizing the total completion time on identical parallel machines has been considered in~\cite{gong2024approximation,liu2024parallel}. The currently best-known upper bounds for the unweighted case are $3.2361$ for a deterministic algorithm and $2.8307$ for a randomized algorithm, established in~\cite{gong2024approximation}. Recently, \cite{liu2024parallel} presented ideas for an improved upper bound of $2.7763$ for a deterministic algorithm, which this paper develops into a complete proof and extends to randomized algorithms. The work of~\cite{zhang2023scheduling} made the first attempt to address minimizing the total \emph{weighted} completion time on a single machine in the adversarial model. However, the algorithms outlined by \cite{zhang2023scheduling} do not admit bounded competitive ratios; we provide adversarial instances proving this in Appendix~\ref{subsec: unboundedness zhang}. We show how to appropriately integrate weights to achieve constant-competitive algorithms. Our main contributions in this paper are:
\begin{itemize}
    \item[] A deterministic $2.3166$-competitive and a randomized $2.1523$-competitive algorithm for minimizing $\sum_j w_jC_j$ on a single machine.
    \item[] A deterministic $2.7763$-competitive and a randomized $2.5110$-competitive algorithm for minimizing $\sum_j w_jC_j$ on identical parallel machines.
\end{itemize}

\subsection{Paper Outline}
This paper advances research on scheduling with testing in the adversarial model and under the total \emph{weighted} completion time objective. In Section~\ref{section: From Preemptive to Test Preemptive Scheduling}, we demonstrate how the \textsc{Golden Round Robin Rule} from~\cite{albers2021explorable} can be naturally extended to account for job-dependent weights. We show that its test-preemptive version coincides with the weighted version of the $\left(\alpha, \beta\right)$-\textsc{PCP} algorithm of~\cite{liu2023power} for a certain choice of parameters and prove an upper bound for the latter leveraging this connection. In Section~\ref{sec: amortized analysis and delaying testing tasks}, we extend the best-known deterministic and randomized versions of \cite{liu2023power} to incorporate job-dependent weights. With this approach, the currently best-known competitive ratios are preserved in the weighted setting. In Section~\ref{section: Parallel Machines}, we extend the single-machine algorithms to identical parallel machines. The resulting deterministic performance guarantee matches that of the best-known deterministic algorithm for uniform weights. The competitive ratio of the randomized algorithm even improves upon that of the best randomized algorithm known for the case of uniform weights.

\section{From Preemptive to Test-Preemptive Scheduling}
\label{section: From Preemptive to Test Preemptive Scheduling}
In single-machine scheduling with testing, two kinds of decisions are required: which jobs to test and in what order to schedule the tasks. The standard approach in the adversarial model has been to decouple those decisions. The decision of which jobs to test is solely based on thresholds, and the ordering decisions depend on test outcomes. We follow this regime as well. In Subsection~\ref{subsec: testing a single job}, we (re)state the main results for testing and makespan minimization and discuss them considering job-dependent weights. In Subsection~\ref{subsec: pcp algorithm} we present a \emph{weighted} version of the~$\left(\alpha, \beta\right)$-\textsc{PCP} algorithm from~\cite{liu2023power}. In Subsection~\ref{subsec: wrr preemptive}, we introduce a \emph{weighted} version of the \textsc{Golden Round Robin Rule} from \cite{albers2021explorable} and link it to the previous algorithm by making it test-preemptive, allowing us to transfer upper bounds.

\subsection{Preliminaries: Testing a Single Job and Weighted Makespan Minimization}
\label{subsec: testing a single job}
As a preliminary step, we revisit makespan minimization in the adversarial scheduling with testing model. The motivation is that the resulting analysis will later be used to guide testing decisions and bound algorithmic processing times throughout the paper. As a by-product, we extend the best-possible deterministic algorithm for minimizing the makespan to the setting of \emph{weighted} makespan minimization. This objective, i.e., minimizing $\max_j w_j C_j$, has recently also been studied in related online scheduling models, e.g., in \cite{amouzandeh2025minimizing}. 

The best-possible deterministic and randomized algorithms for minimizing the makespan on a single machine boil down to making optimal testing decisions for single jobs, since each job contributes linearly to the (expected) makespan, see \cite{albers2021explorable,durr2020adversarial}. In particular, it holds that $\max_{j} C_j^A = \sum_{j=1}^n p_j^A$.

The optimal processing time of job~$j$ is given by $p_j^*:= \min\lbrace t_j + p_j, u_j \rbrace$ and the algorithmic processing time is denoted by $p_j^A$, which is either $t_j+p_j$ or $u_j$, depending on whether the algorithm decides to test. The basic deterministic rule is to test a job if and only if its upper-bound-to-testing-time ratio $r_j:=\nicefrac{u_j}{t_j}$ is at least a chosen threshold $\alpha$ fixed for all jobs in advance, and then scheduling jobs feasibly and without interruption. 

\begin{lemma}[\cite{albers2021explorable,durr2020adversarial}]
    \label{lemma: single job testing}
    If job $j$ is tested, then $p_j^A\leq \left(1+ \nicefrac{1}{r_j}\right)p^*_j \leq\left(1+ \nicefrac{1}{\alpha}\right)p^*_j$. If~$j$ is not tested, then $p_j^A \leq r_j p^*_j \leq \alpha p^*_j$. Testing job~$j$ if and only if $r_j \geq \varphi$ leads to a~$\varphi$- competitive algorithm for minimizing the makespan. This is best possible for deterministic algorithms.
\end{lemma}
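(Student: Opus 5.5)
The argument splits by whether job~$j$ is tested and, in each case, by which of the two terms attains $p_j^*=\min\{t_j+p_j,u_j\}$. The makespan claim then follows by summation, and optimality by an adversarial two-case instance.

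\emph{Tested jobs.} If $j$ is tested, then $r_j\ge\alpha$ and $p_j^A=t_j+p_j$.
\begin{itemize}
\item If $p_j^*=t_j+p_j$, the ratio is $1$.
\item Otherwise $p_j^*=u_j$, and
\[
p_j^A\le t_j+u_j=(1+\nicefrac{1}{r_j})\,u_j\le(1+\nicefrac{1}{\alpha})\,p_j^*,
\]
using $p_j\le u_j$ and $t_j=u_j/r_j$.
\end{itemize}

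\emph{Untested jobs.} If $j$ is not tested, then $p_j^A=u_j$ and $r_j<\alpha$. Since $p_j\ge 0$, we have $t_j+p_j\ge t_j$, so $p_j^*\ge\min\{t_j,u_j\}$.
\begin{itemize}
\item If $r_j\ge 1$, then $p_j^*\ge t_j$, so $p_j^A=r_jt_j\le r_jp_j^*\le\alpha p_j^*$.
\item If $r_j<1$, then $p_j^*=u_j$ and the ratio is $1\le\alpha$, assuming $\alpha\ge1$ as for $\alpha=\varphi$.
\end{itemize}
The bound $r_jp_j^*$ should be read as $\max\{1,r_j\}p_j^*$ in this last subcase.

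\emph{Makespan.} The machine is never idle, so $C_{\max}^A=\sum_j p_j^A$, and by Smith-free reasoning $OPT=\sum_j p_j^*$. Each term satisfies $p_j^A\le\max\{1+\nicefrac{1}{\alpha},\alpha\}\,p_j^*$. For $\alpha=\varphi$ both expressions equal $\varphi$, since $1+\nicefrac{1}{\varphi}=\varphi$. Summing over $j$ gives the ratio $\varphi$.

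\emph{Lower bound.} I would use a single job with $t_j=1$ and $u_j=r$, with the adversary responding to the deterministic decision.
\begin{itemize}
\item If the algorithm tests, the adversary sets $p_j=u_j$, giving ratio $(1+r)/r$.
\item If it does not test, the adversary sets $p_j=0$, giving ratio $r/1=r$.
\end{itemize}
A deterministic algorithm fixes its choice for each $r$ in advance, so the adversary can pick $r=\varphi$. Then both ratios equal $\varphi$, and no deterministic algorithm beats $\varphi$.

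The one subtlety is the $r_j<1$ untested case (and, for tests, $p_j$ could otherwise attain the min); these are routine case checks. There is no real obstacle.
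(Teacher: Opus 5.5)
Your proof is correct and is essentially the standard argument behind this lemma. The paper does not prove it but cites \cite{albers2021explorable,durr2020adversarial}, and its follow-up remark that one considers the extreme outcomes $p_j=0$ for untested jobs and $p_j=u_j$ for tested jobs is exactly your case analysis and your single-job adversary with $u_j=\varphi t_j$. Your observation about untested jobs with $r_j<1$ is also right: there $p_j^A=p_j^*$, so the middle bound in the literal statement should read $\max\{1,r_j\}\,p_j^*$, while the final bound $\alpha p_j^*$ is unaffected because $\alpha\ge 1$.
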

One especially considers the two extreme outcomes $p_j=0$ if an algorithm does not test, or $p_j=u_j$ if an algorithm decides to test job~$j$. Randomizing the testing decision balances between these two worst cases and yields improved guarantees with matching lower bounds derived with Yao's principle:
\begin{lemma}[\cite{albers2021explorable,durr2020adversarial}]
    \label{lemma: randomized single job testing}
    The best-possible randomized algorithm for minimizing the makespan tests a job~$j$ with probability $\nicefrac{\left(r_j^2-r_j\right)}{\left(r_j^2 -r_j + 1\right)}$, if $r_j>1$, and does not test otherwise. It achieves $E[p_j^A]\leq\nicefrac{4}{3}\cdot p_j^*$ and a competitive ratio of~$\nicefrac{4}{3}$.
\end{lemma}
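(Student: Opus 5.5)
The proof reduces to a single job. Since the makespan equals $\sum_j p_j^A$ and the optimal makespan equals $\sum_j p_j^*$, a per-job bound $E[p_j^A]\le \frac43 p_j^*$ for every realization $p_j\in[0,u_j]$ immediately gives competitive ratio $\frac43$ by linearity of expectation. Each job is tested independently with probability $q_j$. Normalize $t_j=1$, so $u_j=r_j=:r$, and write $q:=q_j$. Then
\[
E[p_j^A]=q(1+p_j)+(1-q)r, \qquad p_j^*=\min\{1+p_j,r\}.
\]

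If $r\le 1$, the algorithm does not test, and $p_j^A=u_j=r\le \min\{1+p_j,r\}=p_j^*$ for every $p_j$, so the ratio is $1$. For $r>1$, I maximize the ratio over $p_j$ and split into two cases.

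\emph{Case $p_j\le r-1$.} Here $p_j^*=1+p_j$, so the ratio is
\[
q+\frac{(1-q)r}{1+p_j}.
\]
It is decreasing in $p_j$, so it is maximized at $p_j=0$, giving $q+(1-q)r$.

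\emph{Case $p_j\ge r-1$.} Here $p_j^*=r$, and the ratio is increasing in $p_j$, so it is maximized at $p_j=r$, giving $q(1+r)/r+(1-q)$.

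So the worst case is the larger of these two extreme values. The first is decreasing in $q$ and the second is increasing in $q$, so the min–max choice equalizes them:
\[
q+(1-q)r=\frac{q(1+r)}{r}+1-q .
\]
Solving this linear equation gives
\[
q=\frac{r^2-r}{r^2-r+1},
\]
which is exactly the stated probability. Substituting back, the common value is
\[
\frac{r^2}{r^2-r+1}.
\]
Maximizing over $r>1$ shows this is at most $\frac43$, with equality at $r=2$. Indeed,
\[
\frac{r^2}{r^2-r+1}\le\frac43 \iff 3r^2\le 4r^2-4r+4 \iff (r-2)^2\ge 0 .
\]
This yields the upper bound $E[p_j^A]\le\frac43 p_j^*$.

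For optimality (the ``best possible'' claim), I apply Yao's principle to a single job with $t=1$ and $u=2$, whose $p$ is $0$ or $2$ with equal probability. A deterministic algorithm either tests or does not:
\begin{itemize}
\item Not testing gives expected cost $2$.
\item Testing gives expected cost $\frac12\cdot 1+\frac12\cdot 3=2$.
\end{itemize}
The optimum has expected cost $\frac12\cdot1+\frac12\cdot2=\frac32$. Every deterministic algorithm therefore has ratio of expectations $2/\frac32=\frac43$. This lower-bounds every randomized algorithm, even on a single-job makespan instance.

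The only point needing care is the claim that the worst-case adversary uses only the extreme outcomes $p_j\in\{0,u_j\}$. This follows from the piecewise monotonicity argument above, so I expect no real obstacle; the rest is routine algebra.
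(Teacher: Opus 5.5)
Your proof is correct and follows the route the paper indicates; the paper cites this result rather than proving it, and its one-line sketch has the same three steps. You reduce to the extreme outcomes $p_j=0$ and $p_j=u_j$, choose the testing probability $\nicefrac{(r_j^2-r_j)}{(r_j^2-r_j+1)}$ that equalizes the two worst cases at $\nicefrac{r_j^2}{(r_j^2-r_j+1)}\le\nicefrac{4}{3}$, and obtain the matching lower bound via Yao's principle on the single-job instance with $t=1$, $u=2$, $p\in\{0,2\}$.
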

We now discuss the result on deterministic algorithms from the viewpoint of weighted makespan minimization.
\begin{theorem}
\label{theorem: weighted makespan deterministic}
    Scheduling jobs non-preemptively in order of non-increasing weights combined with testing decisions based on the threshold as in Lemma~\ref{lemma: single job testing}, yields a $\varphi$-competitive deterministic algorithm for minimizing $\max w_j C_j$.
\end{theorem}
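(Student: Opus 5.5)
We compare the algorithm's schedule against an optimal offline schedule job by job, using the per-job guarantee of Lemma~\ref{lemma: single job testing}. Index the jobs so that $w_1 \geq w_2 \geq \dots \geq w_n$; the algorithm processes them in this order without interruption, testing job $j$ if and only if $r_j \geq \varphi$. Then job $k$ completes at $C_k^A = \sum_{j \leq k} p_j^A$. Lemma~\ref{lemma: single job testing} with $\alpha = \varphi$ gives $p_j^A \leq \varphi\, p_j^*$ for every $j$, because $1 + \nicefrac{1}{\varphi} = \varphi$ in the tested case. Hence
\[
w_k C_k^A \leq \varphi\, w_k \sum_{j \leq k} p_j^*.
\]

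The key step is a matching lower bound on $OPT$. Fix $k$ and look at the jobs $1, \dots, k$ in an optimal offline schedule. Each job $j$ occupies the machine for at least $p_j^*$ time in any schedule: either it is run untested for $u_j$, or it is tested and then processed, for $t_j + p_j$. So the last of these $k$ jobs to finish, say job $\ell \leq k$, satisfies $C_\ell^{OPT} \geq \sum_{j \leq k} p_j^*$. Because $w_\ell \geq w_k$ by the ordering, we get
\[
OPT \geq w_\ell C_\ell^{OPT} \geq w_k \sum_{j \leq k} p_j^*.
\]
Combining the two displays yields $w_k C_k^A \leq \varphi \cdot OPT$ for every $k$. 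Taking the maximum over $k$ gives the bound $\varphi$.

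The only subtle point is this lower-bound step. It relies on the weights being non-increasing along the algorithm's order, so that whichever of the first $k$ jobs the optimum finishes last still carries weight at least $w_k$. It also relies on the fact that testing and processing share machine capacity, so that test-preemption in the optimum cannot reduce the total busy time below $\sum_{j \leq k} p_j^*$. Ties in weights cause no issue.
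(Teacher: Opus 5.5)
Your proof is correct. It ends with the same termwise comparison as the paper, $w_k\sum_{j\le k}p_j^A\le\varphi\, w_k\sum_{j\le k}p_j^*$, but you obtain the lower bound on $OPT$ by a different route.

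The paper first characterizes the offline optimum. It removes preemption job by job, reduces processing times to $p_j^*$, and then uses an adjacent-swap exchange argument to show that ordering by non-increasing weights is optimal. This gives the exact value $OPT=\max_k w_k\sum_{j\le k}p_j^*$.

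You bound $OPT$ from below directly. In any feasible schedule, even a preemptive or test-preemptive one, the last of jobs $1,\dots,k$ to finish completes no earlier than $\sum_{j\le k}p_j^*$ and has weight at least $w_k$.

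Your volume argument is shorter and more robust, because it needs neither the preemption-elimination step nor the exchange argument. The paper's route yields a little more: the exact value of $OPT$ and the optimality of the weight order itself. That makes explicit that the problem reduces to the per-job testing decision. Your lower bound is attained by the weight-ordered schedule with processing times $p_j^*$, so nothing is lost in your approach.
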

\begin{proof}
    Let jobs be ordered such that $w_1\geq w_2\geq\ldots\geq w_n$. For known processing times, preemption is not beneficial, and an optimal offline algorithm processes job~$j$ for $p_j^*$ units. This can be shown job-by-job, in decreasing order of completion time, by shifting the parts of a job directly ahead of its last part, and all other jobs occupying this space, to the front accordingly. This procedure does not increase any completion time or the objective value and ensures a non-preemptive schedule. Reducing the processing times to $p_j^*$ for each job $j$ does not increase the objective value either.
    
    We next argue, by an exchange argument (which can be applied iteratively), that it is optimal for an algorithm to schedule jobs in order of non-increasing weights, independent of processing times. Let~$l$ and~$k$ be two consecutively scheduled jobs with weights $w_l< w_k$ and completion times $C_l<C_k$. By swapping these two jobs, we get a new schedule with completion times $C_k'$ and $C_l'$, with $\max\lbrace w_k C_k',w_l C_l'\rbrace\leq w_k C_k$ and thus no worse objective value.
    Therefore, sorting by weights and using the testing decision as in Lemma~\ref{lemma: single job testing}, we obtain for the deterministic setting:
    \begin{equation*}
        ALG = \max_{j=1,\ldots,n} w_j \sum_{k\leq j}p_k^A \leq  \max_{j=1,\ldots,n} w_j  \sum_{k\leq j}\varphi p_k^* = \varphi \cdot OPT \ .
    \end{equation*}
\end{proof}

The algorithms developed in the following sections rely on testing decisions based on thresholds or probability functions, as described in this section. In contrast to weighted makespan minimization, where an optimal job ordering can be fixed in advance, the main challenge for minimizing $\sum w_j C_j$ in this model, apart from testing decisions, concerns dynamic ordering decisions made after processing times are gradually revealed. Unlike the strategies in \cite{zhang2023scheduling} for this objective, which either order jobs by non-decreasing weights too greedily or do not incorporate the weights into comparisons between testing and execution tasks, see Appendix~\ref{subsec: unboundedness zhang}, we provide scheduling rules that integrate weights appropriately to yield competitive algorithms.

\subsection{An Algorithm for Test-Preemptive Scheduling}
\label{subsec: pcp algorithm}
In this subsection, we introduce the \textsc{Weighted} $\left(\alpha, \beta\right)$-\textsc{PCP} algorithm, which is a generalization of the $\left(\alpha, \beta\right)$-\textsc{PCP} algorithm of \cite{liu2023power} for the unweighted case, see Algorithm~\ref{algo: weighted pcp}. The abbreviation \textsc{PCP} was introduced in \cite{liu2023power} and stands for \emph{Prioritizing-Certain-Processing} (times) (not to be confused with Probabilistically Checkable Proofs). First, the testing decision is made independently for each job according to a threshold~$\alpha$, as in Section~\ref{subsec: testing a single job}. Secondly, we keep a priority list, containing currently available tasks. These are sorted according to priority values, which are~$\nicefrac{\beta t_j}{w_j}$ for the testing task (and some delay factor $\beta> 0$), $\nicefrac{u_j}{w_j}$ if a job is executed without testing, and~$\nicefrac{\left(t_j+p_j\right)}{w_j}$ for the execution of a job~$j$ that was tested. In each iteration, the task with the smallest proportionally weighted priority value is scheduled next, ties broken arbitrarily. After completing a testing task for a job, the processing time is revealed, and a corresponding tested-execution task is added.

\begin{algorithm}[htb]
\caption{\textsc{Weighted} $\left(\alpha, \beta\right)$-\textsc{PCP}}
    \label{algo: weighted pcp}
    \KwIn{Jobs $\calJ$, a threshold $\alpha\geq 1$, a delay factor $\beta > 0$.}
    \KwOut{A feasible test-preemptive schedule on a single machine.}
    \BlankLine
    Jobs to be tested: $\calJ_T := \lbrace j \in \calJ \st \alpha t_j \leq  u_j \rbrace$; Jobs left untested $\calJ_U := \calJ\setminus\calJ_T$;
    \\$L :=$ Empty priority list with priority values $\sigma_j$;  
    \\Add tasks for jobs from $\calJ_T$ with $\sigma_j =\nicefrac{\beta t_j}{w_j}$, from $\calJ_U$ with $\sigma_j =\nicefrac{u_j}{w_j}$ to $L$;
    \BlankLine
    \While{$L\neq\emptyset$} 
    {
        Let $j_{\min}$ be the job index of a task in $L$ with smallest $\sigma_j$; \\ 
        Process $j_{\min}$ and remove it from $L$;
        \\\uIf{the task was a testing task}
        {Reinsert a task into $L$ with $\sigma_{j_{\min}} = \nicefrac{\left(t_{j_{\min}} + p_{j_{\min}}\right)}{w_{j_{\min}}}$.}
    }
\end{algorithm}
The intuition behind the priority values in the algorithm for $\beta=1$ is that the jobs are completed in \textsc{WSPT}-order w.r.t.\ the algorithmic processing times $p^A_j$, as we will see in Subsection~\ref{subsec: wrr preemptive}. Note that if all testing decisions were correct, i.e., $p_j^A=p_j^*$ for all jobs $j$, jobs are completed in an optimal order. However, the jobs are not necessarily started in this order, since full information about a job is revealed only upon completion of its testing task. Another observation is that all appearing tasks are completed in order of their priority values $\sigma_j$ for the choice of $\beta=1$.

In Subsection~\ref{subsec: wrr preemptive}, we focus on $\beta=1$ and prove this section's main result, Theorem~\ref{theorem: pcp algo 2varphi}, based on the analysis of the \textsc{Weighted Golden Round Robin Rule} for the related preemptive setting.
\begin{theorem}
    \label{theorem: pcp algo 2varphi}
    The \textsc{Weighted} $\left(\varphi, 1\right)$-\textsc{PCP} algorithm is $2\varphi$-competitive.
\end{theorem}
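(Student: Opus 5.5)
The plan is a direct pairwise charging argument for the test-preemptive schedule; it does not go through the \textsc{Weighted Golden Round Robin Rule}. It rests on two facts about the \textsc{Weighted} $\left(\varphi,1\right)$-\textsc{PCP} algorithm.

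First, with $\beta=1$ every task is processed in non-decreasing order of its priority value. A testing task of $j$ has value $\nicefrac{t_j}{w_j}$, and the execution task it spawns has value $\nicefrac{(t_j+p_j)}{w_j}\geq \nicefrac{t_j}{w_j}$. So no task is ever inserted with a value below one already processed, and the sequence of processed values is monotone.

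Second, every job ends with a final task of value $\nicefrac{p_j^A}{w_j}$, where $p_j^A=t_j+p_j$ if $j$ is tested and $p_j^A=u_j$ otherwise. Hence jobs complete in \textsc{WSPT} order with respect to the algorithmic processing times $p_j^A$. By Lemma~\ref{lemma: single job testing} with threshold $\alpha=\varphi$, we have $p_j^A\leq\varphi p_j^*$ for every job.

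Next I write both objectives pairwise. For the algorithm,
\[
ALG=\sum_j w_jp_j^A+\sum_{\{j,k\}}\Bigl(w_j\cdot(\text{work of }k\text{ before }C_j)+w_k\cdot(\text{work of }j\text{ before }C_k)\Bigr).
\]
For the optimum, I use the standard expression for an optimal \textsc{WSPT} schedule with processing times $p_j^*$, obtained from the discussion in Subsection~\ref{subsec: Problem Description}:
\[
OPT=\sum_j w_jp_j^*+\sum_{\{j,k\}}\min\{w_jp_k^*,\,w_kp_j^*\}.
\]
The self terms satisfy $w_jp_j^A\leq\varphi\, w_jp_j^*$. It therefore suffices to show that each pair term of $ALG$ is at most $2\varphi\min\{w_jp_k^*,w_kp_j^*\}$.

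Fix a pair and assume $k$ completes before $j$. Then:
\begin{itemize}
\item By the \textsc{WSPT} completion order, $\nicefrac{p_k^A}{w_k}\leq\nicefrac{p_j^A}{w_j}$.
\item All of $k$ is processed before $C_j$, contributing $w_jp_k^A$.
\item The only part of $j$ that can lie before $C_k$ is its testing task. By the monotone ordering, this happens only if $\nicefrac{t_j}{w_j}\leq\nicefrac{p_k^A}{w_k}$, and then it contributes $w_kt_j\leq w_jp_k^A$. Also $w_kt_j\leq w_kp_j^A$.
\end{itemize}
Hence the pair term is at most
\[
w_jp_k^A+\min\{w_jp_k^A,w_kp_j^A\}=2\min\{w_jp_k^A,w_kp_j^A\},
\]
where the equality uses $w_jp_k^A\leq w_kp_j^A$. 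Applying $p^A\leq\varphi p^*$ to both entries of the minimum bounds this by $2\varphi\min\{w_jp_k^*,w_kp_j^*\}$. Summing over all pairs and adding the self terms gives $ALG\leq 2\varphi\cdot OPT$.

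The step I expect to need the most care is the precise claim about which portions of $j$ can precede $C_k$. This requires the monotone-priority argument, including arbitrary tie-breaking, which can only move equal-priority tasks earlier and is harmless because the bound uses non-strict inequalities. It also requires noting that $j$'s execution task cannot precede $k$'s final task, since that would contradict the completion order. If that bookkeeping becomes delicate, the fallback is the route announced in the paper: analyze the preemptive \textsc{Weighted Golden Round Robin Rule} and show that its test-preemptive version coincides with the \textsc{Weighted} $\left(\varphi,1\right)$-\textsc{PCP} algorithm.
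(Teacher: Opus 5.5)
Your proof is correct, but it takes a different route from the paper's. The paper never charges pairs inside the test-preemptive schedule. Instead, it notes that \textsc{Weighted} $(\varphi,1)$-\textsc{PCP} and the preemptive $\text{WGR}^3$ rule complete jobs in the same order. It then dominates each completion time job by job via Equation~\eqref{eq: Weighted Algo against WRR}, i.e.\ $C_j^A\le C_j^{\text{WGR}^3}$, and imports Lemma~\ref{lemma: Golden WRR}. That lemma's proof needs the exact $\text{WGR}^3$ completion-time formula of Lemma~\ref{lemma: non-decreasing completion times}, a rearrangement, and Smith's exchange argument to pass from the order $\le_A$ to $\le_o$.

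Your key inequality is the same one that drives Equation~\eqref{eq: Weighted Algo against WRR}: the test of the later-finishing job $j$ precedes $C_k$ only if $t_j/w_j\le p_k^A/w_k$, so $w_kt_j\le w_jp_k^A$. You apply it pair by pair and compare against $OPT=\sum_j w_jp_j^*+\sum_{\{j,k\}}\min\{w_jp_k^*,w_kp_j^*\}$. The minimum automatically absorbs any disagreement between the algorithm's order and the optimal one. So you need neither the $\text{WGR}^3$ intermediary nor a separate exchange argument, and, unlike Lemma~\ref{lemma: test amortization}, no case tree indexed by the optimal order.

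The paper's detour buys a structural insight: \textsc{Weighted} $(\varphi,1)$-\textsc{PCP} is the test-preemptive version of the weighted Golden Round Robin rule and is pointwise no worse than it, which connects the analyses of \cite{albers2021explorable} and \cite{liu2023power}. Your argument is shorter and self-contained, and it gives the slightly finer per-pair bound $2\min\{w_jp_k^A,w_kp_j^A\}$.
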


In Section~\ref{sec: amortized analysis and delaying testing tasks}, we will explain the intuition behind the delay factor $\beta$ as introduced by \cite{albers2021explorable} and show an improved upper bound for another value of $\beta$, leveraging a refined analysis technique. 

\subsection{\textsc{Weighted Golden Round Robin} for Preemptive Scheduling}
\label{subsec: wrr preemptive}
Consider the \textsc{Golden Round Robin Rule} from \cite{albers2021explorable}. It tests job $j$ if its ratio~$r_j$ is at least $\alpha = \varphi$ and performs the usual \textsc{Round Robin Rule} \cite{motwani1994nonclairvoyant}, but on the algorithmic processing times, cycling through all unfinished tasks and processing each for an equal amount of time until all are completed. Using the  \textsc{Weighted Round Robin Rule} ($\text{WR}^3$) \cite{kim2003non}  on the algorithmic processing times as a subroutine, in which all jobs are processed simultaneously in proportion to their weights, leads to the \textsc{Weighted Golden Round Robin Rule} as described in Algorithm~\ref{algo:golden wrr}.

\begin{algorithm}[htb]
    \caption{\textsc{Weighted Golden Round Robin Rule} ($\text{WGR}^3$)}
    \label{algo:golden wrr}
    \KwIn{Jobs $\calJ$.}
    \KwOut{A feasible preemptive schedule on a single machine.}
    \BlankLine
    Jobs to be tested: $\calJ_T := \lbrace j \in \calJ \st \varphi t_j \leq  u_j \rbrace$; Jobs left untested $\calJ_U := \calJ\setminus\calJ_T$;
    \\$L :=$ Set of tasks (weight $w_j$ and time $t_j$ for job $j\in \calJ_T$ and $w_j$, $u_j$ for~$j\in \calJ_U)$.
    \BlankLine
    \While{$L\neq\emptyset$} 
    {
       Process the task of job $k$ at rate $\nicefrac{w_k}{\sum_{j\in L} w_j}$ for each $k\in L$ until a task finishes; Let $j_{\min}$ be its job index; Remove it from $L$;
       \\\uIf{the task of job~$j_{\min}$ was a testing task}
        {Reinsert a task into $L$ with weight $w_{j_{\min}}$ and processing time $p_{j_{\min}}$.
        }}
\end{algorithm}
It was already pointed out in \cite{dogeas2024obligatory,durr2020adversarial,gong2024approximation,liu2023power} that algorithms generating preemptive schedules can be made test-preemptive without increasing the objective value. We explicitly make Algorithm~\ref{algo:golden wrr} test-preemptive to connect the research of \cite{albers2021explorable} and \cite{liu2023power} and to show Theorem~\ref{theorem: pcp algo 2varphi}. For the analysis of Algorithms~\ref{algo: weighted pcp} and \ref{algo:golden wrr}, we define the following relations. For known $p_j$, it is optimal to process the jobs in order of non-decreasing $\nicefrac{p^*_j}{w_j}$. Consider a fixed optimal order. We write $k <_o j$ if job~$k$ appears in this order before job $j$. In addition, we fix the order of jobs in which Algorithm~\ref{algo: weighted pcp} (with $\alpha=\varphi$ and $\beta=1$) completes them. For this order, we write $k<_A j$, if job~$k$ finishes before job~$j$. Note that the jobs are completed in order of non-decreasing ratios~$\nicefrac{p_j^A}{w_j}$.

\begin{lemma}
\label{lemma: non-decreasing completion times}
    Let $C_j^{\text{WGR}^3}$ be the completion time of job $j$ in Algorithm~\ref{algo:golden wrr}. 
    Consider two jobs $j\not = k$. We have that $C_j^{\text{WGR}^3}\leq C_k^{\text{WGR}^3}$ if and only if $ \nicefrac{p_j^A}{w_j} \leq \nicefrac{p_k^A}{w_k}$. If~$j$ is completed and~$k$ is not completed yet, the remaining processing time of $k$ at time~$C_j^{\text{WGR}^3}$ is $p_k^A-\nicefrac{w_k}{w_j}\cdot p_j^A \geq 0$. Especially, we obtain $C_j^{\text{WGR}^3}=\sum_{k\leq_{A} j} p_k^A + \sum_{k>_{A} j} \nicefrac{w_k}{w_j} \cdot p_j^A$. 
\end{lemma}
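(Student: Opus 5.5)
Under WR$^3$ every unfinished job $k$ is served at rate $w_k/W(t)$, where $W(t)$ is the total weight of unfinished jobs at time $t$. The plan is to reduce everything to one invariant: the cumulative algorithmic work each job has received.

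Let $S(t):=\int_0^t \frac{1}{W(s)}\,ds$, where $W(s)=\sum_{j\in L(s)}w_j$ is the total weight of tasks present at time $s$. A job's task list changes over time (test, then tested execution), but exactly one task per unfinished job is in $L$ at any moment, and that task carries the job's weight $w_k$. Hence every unfinished job $k$ has received exactly $w_k S(t)$ units of work by time $t$. This work counts toward $p_k^A$: for untested jobs $p_k^A=u_k$, and for tested jobs the test contributes $t_k$ and the execution contributes $p_k$, consecutively, with $p_k^A=t_k+p_k$. So job $k$ finishes exactly when $w_k S(t)=p_k^A$. One degenerate case is handled by convention: if $p_k=0$, the reinserted task finishes immediately.

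Since $S$ is continuous and strictly increasing, $C_k^{\text{WGR}^3}=S^{-1}(p_k^A/w_k)$. This gives the first claim: $C_j\le C_k$ if and only if $p_j^A/w_j\le p_k^A/w_k$, with ties giving equal completion times. If $j$ is completed while $k$ is not, then $S(C_j)=p_j^A/w_j$, so $k$ has received $w_k p_j^A/w_j$ units of work. Its remaining time is therefore $p_k^A-\frac{w_k}{w_j}p_j^A$, which is nonnegative by the ordering just established.

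For the formula, note that on a single machine that is never idle, $C_j$ equals the total work done by time $C_j$. Every job $k\le_A j$ is finished by then and contributed $p_k^A$; this requires the $<_A$ order to be chosen consistently with ties, and tied jobs finish simultaneously anyway. Every job $k>_A j$ contributed $w_k p_j^A/w_j$. Summing these two contributions gives the stated identity.

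The main subtlety is the invariant in the first step: verifying that swapping a finished testing task for its execution task preserves both the weight in $L$ and the per-job cumulative work. It must also hold across events where several tasks finish at once. I would handle this by processing simultaneous completions in sequence, which leaves $W$ unchanged over zero-length intervals.
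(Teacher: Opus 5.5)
Your proof is correct and takes the same approach as the paper. The paper cites the standard weighted-round-robin argument together with the observation that when $j$ completes, every unfinished $k$ has received $\frac{w_k}{w_j}\, p_j^A$ units of processing. Your virtual-time function $S(t)$, with the invariant that job $k$ has received exactly $w_k S(t)$ units of work, is a careful version of that same observation; swapping a test for its execution task of equal weight preserves the invariant.
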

\begin{proof}
    The proof follows the standard arguments in \cite{kim2003non,pinedo2022scheduling}, observing that when job~$j$ completes, job~$k$ has received $\nicefrac{w_k}{w_j} \cdot p_j^A$ units of processing.
\end{proof}

(Partly) processing a job $k$ before another job $j$ is finished contributes to the completion time of the latter and, therefore, to the total cost $\sum_j w_j C_j$. We define $c(k,j)$ as the \emph{contribution} of job $k$ to the completion time of job $j$, i.e., the amount of time spent on~$k$ before~$j$ is finished. The total cost therefore can also be expressed as $\sum_j \sum_k w_j \cdot c(k,j)$. We denote the optimal value (i.e., full information) by $OPT$.

\begin{lemma}
    \label{lemma: Golden WRR}
    Algorithm~\ref{algo:golden wrr} is $2\varphi$-competitive. This bound is tight.
\end{lemma}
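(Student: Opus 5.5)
I will bound the cost of $\text{WGR}^3$ pairwise through the contributions $c(k,j)$, and charge each pair of jobs to the pair's contribution in an optimal schedule. By Lemma~\ref{lemma: non-decreasing completion times}, for an unordered pair $\{j,k\}$ with $j<_A k$, job $k$ receives $\nicefrac{w_k}{w_j}\cdot p_j^A$ units before $j$ completes, and all of $j$ is done before $k$ completes. The pair's joint contribution is therefore
\[
w_j c(k,j) + w_k c(j,k) = w_k p_j^A + w_k p_j^A = 2 w_k p_j^A = 2\min\{w_k p_j^A, w_j p_k^A\}.
\]
The last equality uses $\nicefrac{p_j^A}{w_j}\le \nicefrac{p_k^A}{w_k}$. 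Hence
\[
ALG = \sum_j w_j p_j^A + 2\sum_{\{j,k\}} \min\{w_k p_j^A, w_j p_k^A\}.
\]
In the same way, $OPT = \sum_j w_j p_j^* + \sum_{\{j,k\}} \min\{w_k p_j^*, w_j p_k^*\}$, because WSPT on $p^*$ is optimal and each pair contributes exactly the minimum.

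The main step applies Lemma~\ref{lemma: single job testing} with $\alpha=\varphi$. Both branches give $p_j^A \le \varphi p_j^*$, since $1+\nicefrac{1}{\varphi}=\varphi$. The diagonal terms then give $\sum_j w_j p_j^A \le \varphi \sum_j w_j p_j^*$. Because the minimum is monotone in each argument, each pair term satisfies $2\min\{w_k p_j^A, w_j p_k^A\} \le 2\varphi \min\{w_k p_j^*, w_j p_k^*\}$. Summing, $ALG \le 2\varphi\bigl(\sum_j w_jp_j^* + \sum_{\{j,k\}}\min\{\cdot\}\bigr) - \varphi\sum_j w_j p_j^* \le 2\varphi\, OPT$. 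The step requiring most care is the clean pairwise decomposition of the preemptive weighted round robin cost, in particular that $c(k,j)$ is exactly $\nicefrac{w_k}{w_j}p_j^A$, which Lemma~\ref{lemma: non-decreasing completion times} supplies. One also has to check that, after a test completes, the reinserted execution task continues at the same weight rate, so the job as a whole behaves like a single task of length $p_j^A$. This holds because the job keeps its rate share $w_j/\sum w$ throughout, whether it is in testing or in execution.

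For tightness I will look for an instance where both losses occur at once: every job suffers the factor $\varphi$, and the round robin pair factor $2$ is realized. A natural candidate is $n$ identical jobs with $w_j=1$, $t_j=1$, $u_j=\varphi$, so each job is tested (since $r_j=\varphi\ge\varphi$). The adversary sets $p_j=u_j=\varphi$, giving $p_j^A = 1+\varphi = \varphi^2$ while $p_j^*=\varphi$. Round robin on equal lengths finishes all jobs at about $n\varphi^2$, so $ALG \approx n^2\varphi^2$. The optimum schedules the jobs sequentially, giving $OPT\approx \nicefrac{n^2\varphi}{2}$. The ratio therefore tends to $2\varphi$ as $n\to\infty$.

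One detail needs checking here. Under round robin, tests finish before the executions, so the jobs do not literally finish together. The pairwise formula still applies, however, since every job has the same $p_j^A$, and the ratio stays $\frac{n\varphi^2+n(n-1)\varphi^2}{n\varphi + n(n-1)\varphi/2}\to 2\varphi$, which confirms tightness.
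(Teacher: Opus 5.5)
Your proof is correct and is essentially the paper's argument. Your pairwise identity $w_jc(k,j)+w_kc(j,k)=2\min\{w_kp_j^A,w_jp_k^A\}$ plus monotonicity of the minimum is the same Smith's-rule exchange step the paper uses to pass from the order $\le_A$ to $\le_o$ before applying $p_k^A\le\varphi p_k^*$, and your tight instance is the paper's Example~\ref{example: golden weighted algorithm} with all times scaled by $\varphi$. One harmless slip: in that instance $\text{WGR}^3$ ends all tests at time $n$ and then all executions together at time $n\varphi^2$, so the jobs do finish simultaneously and $ALG=n^2\varphi^2$ exactly.
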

\begin{proof}
    Consider the (preemptive) schedule constructed by Algorithm~\ref{algo:golden wrr}. 
    As in~\cite{albers2021explorable}, but generalized to job-dependent weights, we obtain for a job~$j$ using Lemma~\ref{lemma: non-decreasing completion times}:
    \begin{equation*}
         \sum_k w_j \cdot c(k,j) = \sum_{k\leq_A j}  w_j \cdot p_k^A + \sum_{k>_A j}  w_j \cdot \frac{w_k}{w_j}\cdot p_j^A = \sum_{k\leq_A j}  w_j \cdot p_k^A + \sum_{k>_A j} w_k \cdot p_j^A \ .
    \end{equation*}
    Summing up the contributions, the total cost of Algorithm~\ref{algo:golden wrr} can be bounded by
    \begin{equation*}
        \begin{aligned}
        \sum_{j=1}^{n} w_j C_j^{\text{WGR}^3} & = \sum_{j} \sum_k w_j \cdot c(k,j) && =  \sum_{j} \Big(\sum_{k\leq_A j}  w_j \cdot p_k^A + \sum_{k>_A j} w_k \cdot p_j^A\Big) \\
        &= \sum_{j} \Big(\sum_{k\leq_A j}  w_j \cdot p_k^A + \sum_{k<_A j} w_j \cdot p_k^A\Big) &&
        \leq 2 \cdot  \sum_{j} \sum_{k\leq_A j} w_j \cdot p_k^A \\
        &\leq 2 \cdot  \sum_{j} \sum_{k\leq_o j} w_j \cdot p_k^A && \leq 2\varphi \cdot\sum_{j} \sum_{k\leq_o j} w_j  \cdot p^*_k  = 2\varphi \cdot OPT \ .
        \end{aligned}
    \end{equation*}    
    The first inequality is obtained by rearranging the terms as outlined in the same way as in the proof of $2$-competitiveness of the $\text{WR}^3$, cf. \cite{kim2003non,pinedo2022scheduling}, as well. Because the order is w.r.t.~$p_j^A$, the same standard exchange argument as to prove the optimality of Smith's rule \cite{Smith1956WSPT} for $1 \st \st  \sum w_j C_j$ shows the second inequality. Finally, the transition from $p_j^A$ to $p_j^*$ yields another factor of $\varphi$ according to Lemma \ref{lemma: single job testing}. Tightness of the bound is achieved by the example for the unweighted case given in \cite{albers2021explorable}, which will also be considered in Example~\ref{example: golden weighted algorithm}.
\end{proof}

We now show Theorem~\ref{theorem: pcp algo 2varphi}, i.e., that the \textsc{Weighted} $\left(\varphi, 1 \right)$-\textsc{PCP} algorithm (i.e., Algorithm~\ref{algo: weighted pcp}) achieves a competitive ratio of $2\varphi$, being the test-preemptive version of the \textsc{Weighted Golden Round Robin Rule} (i.e., Algorithm~\ref{algo:golden wrr}).

\begin{proof}[Proof of Theorem~\ref{theorem: pcp algo 2varphi}]
    Using Lemma~\ref{lemma: non-decreasing completion times}, we obtain that both Algorithm~\ref{algo: weighted pcp} with $\alpha =\varphi, \beta=1$ and Algorithm~\ref{algo:golden wrr} complete tasks in the same order. Especially, jobs are completed in order of non-decreasing ratios $\nicefrac{p^A_j}{w_j}$ in both algorithms. With~$C_j^{A}$ denoting the completion time of job~$j$ in the first algorithm, we have that it is bounded at most by the sum of algorithmic processing times of jobs that are completed before $j$ and the testing times of jobs that are completed afterwards but are tested before $j$ is completed:
    \begin{equation}
        \label{eq: Weighted Algo against WRR}
            C_j^{A} 
            \leq \sum_{k\leq_{A} j}  p_k^A + \sum_{\substack{k>_{A} j\\ k \textup{ is tested} \\ w_j\cdot t_k\leq w_k \cdot p_j^A}} t_k 
            \leq \sum_{k\leq_{A} j} p_k^A + \sum_{k>_{A} j} \frac{w_k}{w_j} \cdot p_j^A 
            = C_j^{\text{WGR}^3} \ .
    \end{equation}
    Multiplying by $w_j$, summing up over $j$ and using Lemma~\ref{lemma: Golden WRR} we can conclude that Algorithm~\ref{algo: weighted pcp} with $\alpha =\varphi$ and $\beta=1$ is also $2\varphi$-competitive. 
\end{proof}
No tight example for this bound of the \textsc{Weighted} $\left(\varphi, 1 \right)$-\textsc{PCP} algorithm is currently known. The example that shows the tightness of the bound for Algorithm~\ref{algo:golden wrr} only gives a lower bound of~$2\varphi -1$, see Example~\ref{example: golden weighted algorithm}. The competitive ratio of $2\varphi$ for the \textsc{Weighted} $(\varphi,1)$-\textsc{PCP} algorithm can also be established differently, following a weighted extension of the analysis of \cite{gong2024approximation} or of~\cite{liu2023power}, see Theorem~\ref{theorem competitive ratio beta}, as well.

\begin{example}
    \label{example: golden weighted algorithm}
    Consider the same instance family as in the tight example for $\text{WGR}^3$ in \cite{albers2021explorable}. It consists of~$n$ jobs and parameters $u_j=p_j=w_j=1$ and $t_j=\nicefrac{1}{\varphi}$ for each job $j$.
    In the preemptive $\text{WGR}^3$ schedule, all testing tasks are completed by $\nicefrac{1}{\varphi}\cdot n$, while all tested-execution tasks finish by $\left(1+\nicefrac{1}{\varphi}\right)\cdot n$. The test-preemptive $\left(\varphi, 1 \right)$-\textsc{PCP} algorithm first schedules all testing tasks in a row, followed by all tested-execution tasks. This yields an objective value of $\nicefrac{n^2}{\varphi}+\nicefrac{\left(n+1\right)n}{2}$. The optimal offline algorithm does not test any job and schedules all jobs consecutively with an objective value of~$\nicefrac{\left(n+1\right)n}{2}$. Consequently, the competitive ratio of $\left(\varphi, 1 \right)$-\textsc{PCP} satisfies
    \begin{equation*}
        \frac{ALG}{OPT}= 1 + \frac{2\cdot n^2}{\varphi\cdot(n+1)\cdot n}\stackrel{n\longrightarrow \infty}{\longrightarrow} 1 + \frac{2}{\varphi} = 2\varphi -1 =  \sqrt{5}
        \geq 2.236 \ .
    \end{equation*}
    Thus, following this example, the exact competitive ratio of the \textsc{Weighted} $\left(\varphi, 1 \right)$-\textsc{PCP} algorithm lies between $2\varphi-1$ and $2\varphi$.
\end{example} 

\section{Delaying Testing and a Refined Analysis}
\label{sec: amortized analysis and delaying testing tasks}
In Subsection~\ref{subsec: best-known deterministic algo}, we extend the analysis of \cite{liu2023power} for the $\left(\alpha, \beta\right)$-\textsc{PCP} algorithm to incorporate weights (cf. Algorithm~\ref{algo: weighted pcp}) and give a lower bound of $2$ for any choice of parameters in Subsection~\ref{subsec: lower bound PCP}. In Subsection~\ref{subsec: best-known randomized algo}, we show how the results from~\cite{liu2023power} for the best-known randomized algorithm generalize to the weighted setting.

\subsection{A Deterministic Algorithm for $\sum w_jC_j$}
\label{subsec: best-known deterministic algo}
Table~\ref{table: beta results} gives an overview over the following results. The algorithms ($\alpha$, $\beta$)-\textsc{SORT} from \cite{albers2021explorable} and ($\alpha$, $\beta$)-\textsc{PCP} from \cite{liu2023power} are generalized to the broader setting of arbitrary job weights by scaling the priority values. Note that the only difference between those two algorithms is the priority value of a job that is reinserted. The lower bound of~$2$ for the first one stems from \cite{albers2021explorable}, while the proof for the second one is new. The upper bounds on the competitive ratios are shown by~\cite{liu2023power} for the unweighted case. We prove that they also hold for the more general case.

{\small
\begin{table}[htb]
    \centering
    \caption{Overview of results for the competitive ratios of the algorithms.}
        \label{table: beta results}
        \vspace{2ex}
        \begin{tabular}{c|c|c|c|c}
             Algorithm & Testing decision & Priority values & Lower bound & Upper bound \\
             \hline
            \makecell{ \textsc{Weighted} \\ ($\alpha$, $\beta$)-\textsc{SORT}}& \makecell{Test job iff \\ $\alpha t_j \leq u_j$}&$\frac{1}{w_j}\lbrace \beta t_j, u_j, p_j \rbrace$ & $2$ \cite{albers2021explorable} & \makecell{$1+ \sqrt{2}\leq 2.4143$ \\ for $(\sqrt{2},\sqrt{2})$}\\
            \hline
            \makecell{ \textsc{Weighted} \\ ($\alpha$, $\beta$)-\textsc{PCP}}& \makecell{Test job iff \\ $\alpha t_j \leq u_j$}&$\frac{1}{w_j} \lbrace \beta  t_j, u_j, t_j + p_j \rbrace$ & $2$ & \makecell{$2.3166$ \\ for $(\varphi,2.3166)$}
        \end{tabular}
\end{table}
}
Since we can rewrite the objective function as 
\begin{equation*}
    \sum_{j=1}^{n} w_j \cdot C_j^A = \sum_{j=1}^{n} \bigg(w_j \cdot c(j,j) + \Big( \sum_{k<_o j} \big(w_j \cdot c(k,j)+w_k \cdot c(j,k)\big)\Big)\bigg),
\end{equation*}
we can apply and generalize the idea based on pairwise contributions of jobs due to \cite{liu2023power} for the unweighted case and bound $w_j \cdot c(k,j)+w_k \cdot c(j,k)$ against its value in an optimal solution $w_j \cdot p^*_k$ in order to achieve an upper bound on the competitive ratio of the proposed algorithm. Lemma~\ref{lemma: test amortization} generalizes the analysis of~\cite{liu2023power} for $(\alpha,\beta)$-\textsc{PCP} to job-dependent weights. Their results for $(\alpha,\beta)$-\textsc{SORT} generalize analogously to job-dependent weights, cf. Appendix~\ref{seq: upper bound SORT}.

\begin{lemma} 
\label{lemma: test amortization}
Consider jobs $j\not = k$, and parameters $\alpha\geq 1$ and $\beta >0$. 
\\ If the \textsc{Weighted} $(\alpha,\beta)$-\textsc{PCP} algorithm does not test job $k$, it holds
\begin{equation*}
     w_j \cdot c(k,j)+w_k \cdot c(j,k) \leq w_j \cdot \Big(1+ \frac{1}{\beta}\Big) u_k \leq w_j \cdot \alpha\cdot \Big(1+ \frac{1}{\beta}\Big) \cdot p^*_k \ .
\end{equation*}
If the \textsc{Weighted} $(\alpha,\beta)$-\textsc{PCP} algorithm does test job $k$, it holds
\begin{equation*}
    \begin{aligned}
        w_j \cdot c(k,j)+w_k \cdot c(j,k) &\leq w_j \cdot \max\bigg\lbrace \beta \cdot t_k, 2t_k + p_k, (t_k+p_k)\Big(1+\frac{1}{\beta}\Big)\bigg\rbrace \\&\leq w_j \cdot \max\bigg\lbrace \beta, 2, 1+\frac{2}{\alpha}, \Big(1+\frac{1}{\alpha}\Big)\Big(1+\frac{1}{\beta}\Big)\bigg\rbrace \cdot p^*_k \ .
    \end{aligned}
\end{equation*}
\end{lemma}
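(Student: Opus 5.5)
The plan is a direct case analysis on the relative order of the tasks of $j$ and $k$ in the priority list of Algorithm~\ref{algo: weighted pcp}. The key observation is that every task, once it becomes available, is processed in non-decreasing order of its priority value $\sigma$. Testing tasks are available from the start. A tested-execution task is inserted only after its own test, and its priority is at least that test's priority whenever $\beta\le 1$. For general $\beta$ I would simply compare priority values at the moment the relevant task is chosen.

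The consequence I will use throughout is this: if a task of $j$ is processed before some task of $k$ finishes, then that task's priority is at most the priority of the $k$-task then present in $L$. This gives a weighted inequality of the form $w_k\cdot(\text{length of }j\text{'s task}) \le w_j\cdot(\text{length-type quantity of }k)$. Such inequalities are exactly what is needed, because $w_k\, c(j,k)$ must be charged to $w_j$ times a quantity depending only on $k$.

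\emph{Job $k$ not tested.} Then $k$ is a single task with $\sigma_k=u_k/w_k$.
\begin{itemize}
\item If $k$ finishes before $j$, then $c(k,j)=u_k$. The only part of $j$ that can precede $k$ is a testing task of $j$ with $\beta t_j/w_j\le u_k/w_k$, so $w_k\,c(j,k)\le w_k t_j\le w_j u_k/\beta$.
\item If $j$ finishes before $k$, then $c(k,j)=0$. The last task of $j$ has priority at most $u_k/w_k$, whether it is $u_j/w_j$ or $(t_j+p_j)/w_j$, so $w_k\,c(j,k)\le w_k p_j^A\le w_j u_k$.
\end{itemize}
Both cases give $w_j(1+1/\beta)u_k$. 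The second inequality then follows from Lemma~\ref{lemma: single job testing}: $u_k\le\alpha p_k^*$ for untested $k$.

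\emph{Job $k$ tested.} Now $k$ has a test task with priority $\beta t_k/w_k$ and an execution task with priority $(t_k+p_k)/w_k$. I would split on where the final task of $j$ falls relative to these two.
\begin{itemize}
\item If $j$ completes before $k$'s test, the final task of $j$ beat $k$'s test. Then $c(k,j)=0$ and $w_k\,c(j,k)\le w_k p_j^A\le w_j\beta t_k$.
\item If $j$ completes between $k$'s test and $k$'s execution, then $c(k,j)=t_k$. The final priority of $j$ is at most $(t_k+p_k)/w_k$, so $w_k\,c(j,k)\le w_j(t_k+p_k)$. The total is $w_j(2t_k+p_k)$.
\item If $k$ completes before $j$, then $c(k,j)=t_k+p_k$, and at most the test of $j$ precedes $k$'s execution. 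This gives $w_k t_j\le w_j(t_k+p_k)/\beta$, hence the total is $w_j(1+1/\beta)(t_k+p_k)$.
\end{itemize}
Taking the maximum yields the first displayed bound.

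For the conversion to $p_k^*=\min\{t_k+p_k,u_k\}$, use $t_k\le u_k/\alpha$, which holds since $k$ is tested:
\begin{itemize}
\item $t_k\le p_k^*$, which handles the $\beta$ term.
\item $2t_k+p_k$ is at most $2p_k^*$ if $p_k^*=t_k+p_k$, and at most $2t_k+u_k\le(1+2/\alpha)u_k$ otherwise.
\item $t_k+p_k\le\min\{t_k+p_k,\;t_k+u_k\}\le(1+1/\alpha)p_k^*$, which handles the last term.
\end{itemize}

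The main obstacle I expect is making the ordering claims rigorous for $\beta>1$. In that regime a tested-execution task may have a smaller priority than its own test, so the "completed in priority order" intuition from Subsection~\ref{subsec: wrr preemptive} fails. I would handle this by always arguing at the moment a $j$-task is selected, comparing it only with whichever $k$-task is currently in $L$. Since each job has at most one task in $L$ at any time, this comparison is well defined and suffices for every case above.
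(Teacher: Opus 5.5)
Your strategy is the paper's: a case analysis on the relative order of the tasks of $j$ and $k$, charging $w_k\,c(j,k)$ through the priority comparison made when a $j$-task is selected. Grouping by where $j$ completes is a neat compression of the paper's explicit enumeration of all task orders. The following parts are correct:
\begin{itemize}
\item the not-tested case;
\item your first two tested cases;
\item the conversion to $p_k^*$.
\end{itemize}

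However, your third tested case ($k$ completes before $j$) contains a step that fails when $\beta>1$. You assert that if $\tau_j$ precedes $\pi_k$, then $\beta t_j/w_j\le (t_k+p_k)/w_k$. This is justified only when $\tau_j$ is selected after $\tau_k$ has finished, i.e.\ in the order $[\tau_k,\tau_j,\pi_k,\pi_j]$.

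In the order $[\tau_j,\tau_k,\pi_k,\pi_j]$, the $k$-task in $L$ when $\tau_j$ is chosen is $\tau_k$, not $\pi_k$. The only information is then $\beta t_j/w_j\le\beta t_k/w_k$, i.e.\ $w_k t_j\le w_j t_k$. For $\beta\le 1$ this still implies your inequality. For $\beta>1$ it does not. Concretely, take $\beta=2$, $w_j=w_k=1$, $t_k=1$, $p_k=0$, $t_j=0.99$, $p_j=1.5$, with $u_j,u_k$ large enough that both jobs are tested. The algorithm runs $[\tau_j,\tau_k,\pi_k,\pi_j]$, and $w_j c(k,j)+w_k c(j,k)=1.99>1.5=w_j(1+1/\beta)(t_k+p_k)$. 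So your per-case bound is false here.

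The repair is the principle you announced at the end but did not apply in this case: compare with whichever $k$-task is currently in $L$. Split the case according to whether $\tau_j$ precedes $\tau_k$ or lies between $\tau_k$ and $\pi_k$.
\begin{itemize}
\item If $\tau_j$ lies between $\tau_k$ and $\pi_k$, your argument works and gives $w_j(1+1/\beta)(t_k+p_k)$.
\item If $\tau_j$ precedes $\tau_k$, you get $w_j(t_k+p_k)+w_k t_j\le w_j(2t_k+p_k)$. This is already a term of the maximum; it is exactly the paper's row $[\tau_j,\tau_k,\pi_k,\pi_j]$.
\end{itemize}
So the first displayed bound, and hence the lemma, survive. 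You cannot skip this sub-case, because the parameters actually used later ($\hat\beta\approx 2.3166$, and $\beta=2$ in the randomized algorithm) are in the regime $\beta>1$.
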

\begin{proof}
    The proof extends the proof of~\cite{liu2023power} to job-dependent weights. We consider an exhaustive comparison tree based on the algorithm's actions. Similar reasoning has also been used in \cite{albers2021explorable,dogeas2024obligatory}. Assume that $k<_o j$ for a given pair of jobs $k,j$. In the optimal solution, we have mutual contributions of~$c^*(k,j)= p^*_k$ and~$c^*(j,k)=0$. 
    In each case, we find an upper bound on the ratio between the weighted sum of mutual contributions $w_j\cdot c(k,j)+w_k \cdot c(j,k)$ in the algorithm and~$w_j \cdot p^*_k$ in the optimum, depending on~$\alpha$ and~$\beta$.
    We extend the notation from~\cite{dogeas2024obligatory} and denote the testing task by $\tau_j$, the tested-execution task by $\pi_j$, and the untested-execution task of job~$j$ by $\upsilon_j$. An ordered list, such as $\left[ \upsilon_k, \tau_j,\pi_j\right]$ describes the testing and ordering decisions of the algorithm for jobs $k,j$ compactly. 
    
    First, we consider the cases in which the algorithm does not test job~$k$, see Table~\ref{table:no test on k}. To bound contributions of parts of job~$j$ against parts of job~$k$, we use inequality relations of priority values of tasks that come along with the algorithm's order of tasks in each case considered, independent of the tie-breaking rule used. Overall, it follows that $w_j\cdot c(k,j)+ w_k \cdot c(j,k) \leq w_j \cdot (1+ \nicefrac{1}{\beta}) u_k$ in this case. Bounding $u_k\leq\alpha p_k^*$ by Lemma~\ref{lemma: single job testing} yields the stated bound.

{\small 
        \begin{table}[htbp]
                \centering
                \caption{Case $p_k^A=u_k$\label{table:no test on k}}
                \vspace{2ex}
                \begin{tabular}{c|c|l|c}
                     &Order of Tasks & $w_j\cdot c(k,j)+w_k \cdot c(j,k)$ & Used Inequality \\
                     \hline
                     \multirow{2}{*}{$p_j^A=u_j$}
                     & $\left[ \upsilon_k,\upsilon_j\right]$ & $ w_j\cdot u_k$ & \\
                     &$\left[\upsilon_j,\upsilon_k\right]$  & $ w_k\cdot u_j \leq w_j\cdot u_k$ & $\frac{u_j}{w_j}\leq \frac{u_k}{w_k}$\\
                     \hline
                     \multirow{3}{*}{$p_j^A=t_j + p_j$} 
                     & $\left[ \upsilon_k,\tau_j, \pi_j\right]$ & $ w_j\cdot u_k$& \\
                     
                     & $\left[ \tau_j,\upsilon_k, \pi_j\right]$ & $ w_j\cdot u_k + w_k\cdot t_j \leq \big( 1+ \frac{1}{\beta} \big) w_j\cdot u_k$& $\frac{\beta \cdot t_j}{w_j}\leq\frac{u_k}{w_k}$\\
                     
                     & $\left[ \tau_j, \pi_j ,\upsilon_k\right]$ & $ w_k\cdot (t_j + p_j) \le w_j\cdot u_k $ & $\frac{t_j+p_j}{w_j}\leq \frac{u_k}{w_k}$
                \end{tabular}
            \end{table}
            }

 Secondly, we consider the cases when the algorithm tests job $k$, see Table~\ref{table:test on k}. Overall, we obtain $w_j \cdot c(k,j) + w_k \cdot c(j,k) \leq w_j \cdot \max\lbrace \beta \cdot t_k, 2t_k+p_k, \left(1+\nicefrac{1}{\beta}\right)\left(t_k+p_k\right)\rbrace$ in these cases, as stated above. We further bound the expressions with respect to the optimal running time $p^*_k$. First, we have~$\beta t_k \leq \beta p^*_k$, since~$p^*_k=\min\lbrace t_k+p_k, u_k\rbrace$ and $t_k\leq \alpha t_k \leq u_k$. Secondly, if $p^*_k=t_k+p_k$, we have $2t_k + p_k\leq 2t_k + 2p_k = 2p^*_k$. If, however, $p^*_k=u_k$, we have $2t_k + p_k = t_k + t_k+ p_k \leq \nicefrac{p^*_k}{\alpha} + \left(1+\nicefrac{1}{\alpha}\right)\cdot p^*_k= \left(1+\nicefrac{2}{\alpha}\right)p^*_k$ by Lemma~\ref{lemma: single job testing} and $t_k\leq \nicefrac{u_k}{\alpha}$ since $k$ was tested.
Lastly, $\left(1+\nicefrac{1}{\beta}\right)\left(t_k+p_k\right)\leq \left(1+\nicefrac{1}{\beta}\right)\cdot\left(1+\nicefrac{1}{\alpha}\right)p^*_k$ by Lemma~\ref{lemma: single job testing}. These calculations prove the second inequality in the statement of the Lemma.

{\small
    \begin{table}[htbp]
        \centering
        \caption{Case $p_k^A= t_k + p_k$\label{table:test on k}}
        \vspace{2ex}
        \begin{tabular}{c|c|l|c}
             &Order of Tasks & $w_j\cdot c(k,j)+w_k \cdot c(j,k)$ & Used Inequality \\
             \hline
            \multirow{3}{*}{\rotatebox[origin=c]{90}{$p_j^A=u_j$}}
            &$\left[\tau_k, \pi_k, \upsilon_j \right]$ & $w_j\cdot (t_k+p_k)$ & \\
            
            &$\left[ \tau_k,\upsilon_j,\pi_k\right]$ & $ w_j\cdot t_k + w_k\cdot u_j \leq w_j\cdot (2t_k+ p_k)$ & $\frac{u_j}{w_j}\leq\frac{t_k+p_k}{w_k}$\\
            
            &$\left[ \upsilon_j, \tau_k, \pi_k \right]$ & $ w_k\cdot u_j \leq \beta\cdot w_j \cdot t_k$ & $\frac{u_j}{w_j}\leq \frac{\beta \cdot t_k}{w_k}$\\
            \hline
            \multirow{6}{*}{\rotatebox[origin=c]{90}{$p_j^A=t_j + p_j$}}
                 &$\left[\tau_k, \pi_k, \tau_j, \pi_j \right]$ & $ w_j\cdot (t_k+p_k)$& \\
                 
                 &$\left[\tau_k,  \tau_j, \pi_k, \pi_j \right]$ & $ w_j\cdot (t_k+p_k) + w_k\cdot t_j \leq w_j\cdot \big(1+\frac{1}{\beta}\big)\left(t_k+p_k\right)$ & $\frac{\beta \cdot t_j}{w_j}\leq\frac{t_k+p_k}{w_k}$\\

                 &$\left[\tau_k,  \tau_j, \pi_j, \pi_k \right]$ &$ w_j\cdot t_k + w_k\cdot (t_j+p_j) \leq w_j\cdot (2t_k+ p_k)$ & $\frac{t_j+p_j}{w_j}\leq\frac{t_k+p_k}{w_k}$\\
                 
                 &$\left[\tau_j, \tau_k, \pi_k, \pi_j \right]$ & $ w_j\cdot (t_k+p_k) + w_k\cdot t_j \leq w_j\cdot \left(2t_k+p_k\right)$ & $\frac{\beta \cdot t_j}{w_j}\leq\frac{\beta \cdot t_k}{w_k}$ \\
                 
                 &$\left[\tau_j, \tau_k, \pi_j, \pi_k \right]$ & $ w_j\cdot t_k + w_k\cdot (t_j+p_j) \leq w_j\cdot (2t_k+ p_k)$& $\frac{t_j+p_j}{w_j}\leq\frac{t_k+p_k}{w_k}$\\
                 
                 &$\left[\tau_j, \pi_j , \tau_k, \pi_k \right]$ & $ w_k\cdot(t_j+p_j) \leq \beta\cdot w_j \cdot t_k$& $\frac{t_j+p_j}{w_j}\leq \frac{\beta \cdot t_k}{w_k}$
            \end{tabular}
        \end{table}
        }
\end{proof}
\noindent Intuitively, two different trade-offs in terms of uncertainty are incorporated in the bounds of Lemma~\ref{lemma: test amortization}. The parameter $\alpha$ balances the risks of testing jobs that turn out to be long and not testing jobs that are actually short, cf. Lemma~\ref{lemma: single job testing}. The parameter $\beta$ balances the risk of delaying many jobs when one tests jobs that turn out to be long early and delaying tests of actually short jobs by too much. The bounds from Lemma~\ref{lemma: test amortization} will now be applied to derive an upper bound on the competitive ratio of the \textsc{Weighted} $(\alpha,\beta)$-\textsc{PCP} as in~\cite{liu2023power}.
\begin{theorem}
\label{theorem competitive ratio beta}
    The competitive ratio of the \textsc{Weighted} $(\alpha,\beta)$-\textsc{PCP} algorithm with $\alpha\geq 1$ and $\beta>0$ is at most 
    \begin{equation*}
    Q(\alpha,\beta) := \max\bigg\lbrace \alpha\Big(1+\frac{1}{\beta}\Big), \beta, 2, 1+\frac{2}{\alpha}, \Big(1+\frac{1}{\alpha}\Big)\Big(1+\frac{1}{\beta}\Big)\bigg\rbrace \ .
    \end{equation*}
    The optimal choice of the parameters to minimize this term is $\hat{\alpha}=\varphi$ and $\hat{\beta} = \nicefrac{\left(\varphi+\sqrt{\varphi\left(\varphi+ 4\right)}\right)}{2}$, resulting in a competitive ratio of at most $\hat{\beta} \leq 2.3166$.
\end{theorem}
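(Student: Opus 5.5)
The plan is to combine the pairwise decomposition of the objective with Lemma~\ref{lemma: test amortization}, and then optimize the resulting max-expression over $(\alpha,\beta)$. We start from the identity
\begin{equation*}
\sum_j w_j C_j^A = \sum_j \Big( w_j\, c(j,j) + \sum_{k<_o j} \big(w_j\, c(k,j) + w_k\, c(j,k)\big)\Big).
\end{equation*}
The diagonal term is $c(j,j) = p_j^A$. By Lemma~\ref{lemma: single job testing} we have $p_j^A \leq \max\{\alpha, 1+\nicefrac{1}{\alpha}\}\, p_j^*$. Since $\alpha \leq Q$ and $1+\nicefrac{1}{\alpha} \leq 1+\nicefrac{2}{\alpha} \leq Q$, this gives $w_j\, c(j,j) \leq Q(\alpha,\beta)\, w_j\, p_j^*$.

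For each off-diagonal pair with $k <_o j$, Lemma~\ref{lemma: test amortization} applies with $k$ in the role of the earlier job in the optimal order. If $k$ is untested, the bound is $\alpha(1+\nicefrac{1}{\beta})\, w_j\, p_k^*$. If $k$ is tested, the bound is the maximum of the remaining four terms times $w_j\, p_k^*$. In either case the pair contributes at most $Q(\alpha,\beta)\, w_j\, p_k^*$. Summing over all pairs and the diagonal gives
\begin{equation*}
ALG \leq Q(\alpha,\beta) \sum_j \sum_{k \leq_o j} w_j\, p_k^* = Q(\alpha,\beta)\cdot OPT,
\end{equation*}
because the optimal offline schedule processes the jobs non-preemptively in the order $<_o$ with lengths $p_k^*$.

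Next we minimize $Q$. For $\alpha \geq 1$ the terms $2$ and $1+\nicefrac{2}{\alpha} \leq 3$ need checking, but at the optimum they are dominated. The term $(1+\nicefrac{1}{\alpha})(1+\nicefrac{1}{\beta})$ is at most $\alpha(1+\nicefrac{1}{\beta})$ exactly when $\alpha \geq \varphi$. So the relevant terms are $\beta$ and $\max\{\alpha, 1+\nicefrac{1}{\alpha}\}\,(1+\nicefrac{1}{\beta})$. The factor $\max\{\alpha, 1+\nicefrac{1}{\alpha}\}$ is minimized at $\alpha = \varphi$, where it equals $\varphi$; hence $\hat\alpha = \varphi$. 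Then $\beta$ is increasing in $\beta$ while $\varphi(1+\nicefrac{1}{\beta})$ is decreasing, so the optimum balances them: $\beta = \varphi(1+\nicefrac{1}{\beta})$, i.e. $\beta^2 - \varphi\beta - \varphi = 0$. The positive root is $\hat\beta = \nicefrac{(\varphi + \sqrt{\varphi(\varphi+4)})}{2} \approx 2.3166$.

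Finally we verify that $2$ and $1+\nicefrac{2}{\varphi} \approx 2.236$ are both below $\hat\beta$. For optimality we argue the following. For any $\alpha$, $Q \geq \max\{\beta, \varphi(1+\nicefrac{1}{\beta})\}$, since $\max\{\alpha, 1+\nicefrac{1}{\alpha}\} \geq \varphi$. Moreover, $\max\{\beta, \varphi(1+\nicefrac{1}{\beta})\} \geq \hat\beta$ for every $\beta > 0$, by the monotonicity argument above.

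The main obstacle is only bookkeeping. We must ensure the diagonal term and the ordered pairs are each charged exactly once against $OPT$, and that the case analysis of Lemma~\ref{lemma: test amortization} is applied with the correct orientation. The orientation matters because the lemma charges against $w_j\, p_k^*$, which requires $k <_o j$.
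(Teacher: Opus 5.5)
Your proposal is correct and follows the paper's proof essentially step for step. It uses the same pairwise decomposition, Lemma~\ref{lemma: test amortization} for the pairs with $k<_o j$, Lemma~\ref{lemma: single job testing} for the diagonal term, and the balancing $\beta=\varphi(1+\nicefrac{1}{\beta})$ at $\alpha=\varphi$. Your lower bound $Q\ge\max\{\beta,\varphi(1+\nicefrac{1}{\beta})\}\ge\hat\beta$ is a slightly more explicit version of the paper's optimality remark. One small point: $\alpha$ alone is not a term of $Q$, so $\alpha\le Q$ should be justified via $\alpha\le\alpha(1+\nicefrac{1}{\beta})$.
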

\begin{proof}
    \begin{equation}
    \label{eq: competitive ratio PCP}
    \begin{aligned}
        \sum_{j=1}^{n} w_j\cdot C_j^A &= \sum_{j=1}^{n} \Big( w_j \cdot c(j,j) +  \sum_{k<_o j} \big( w_j \cdot c(k,j)+w_k \cdot c(j,k)\big)\Big)
        \\&\leq \sum_{j=1}^{n} \Big(w_j\cdot p_j^A + \sum_{k<_o j} \big(w_j \cdot Q(\alpha,\beta)\cdot p^*_k\big)\Big)
        \\&\leq Q(\alpha,\beta)\cdot \sum_{j=1}^{n} \sum_{k\leq_o j} w_j\cdot  p^*_k = Q(\alpha,\beta)\cdot OPT
    \end{aligned}
    \end{equation} 
    The first inequality to bound the pairwise contributions of jobs follows from Lemma~\ref{lemma: test amortization}. The second inequality holds, since $\big(1+\nicefrac{1}{\beta}\big)\geq 1$ for $\beta>0$, and, thus, $p_j^A\leq\max\lbrace \alpha, 1+\nicefrac{1}{\alpha}\rbrace\cdot p^*_j \leq Q(\alpha,\beta) \cdot p^*_j$ according to Lemma~\ref{lemma: single job testing} and the definition of $Q(\alpha,\beta)$ in the statement of Theorem~\ref{theorem competitive ratio beta}. Minimizing this upper bound over $\alpha$ and $\beta$ yields $\hat{\alpha}=\varphi$ and $\hat{\beta} = \nicefrac{\left(\varphi+\sqrt{\varphi\left(\varphi+ 4\right)}\right)}{2}$ as optimal values, which is inherited from the unweighted case \cite{liu2023power}: the terms $\alpha\big(1+\nicefrac{1}{\beta}\big)$ and $\big(1+\nicefrac{1}{\alpha}\big)\big(1+\nicefrac{1}{\beta}\big)$ intersect at $\alpha=\varphi$ for any fixed $\beta>0$ as in Lemma~\ref{lemma: single job testing}. Solving $\beta=\varphi\big(1+\nicefrac{1}{\beta}\big)$ yields the stated formula for $\hat{\beta}$. The three terms are equal for this parameter combination, and at least one is strictly larger for any other parameter combination. In addition, we have that $\max\lbrace 2,1+\nicefrac{2}{\varphi} \rbrace=\max\lbrace 2,\sqrt{5} \rbrace < \hat{\beta}$ concluding the proof. 
\end{proof}

\subsection{A Lower Bound for the (\textsc{Weighted}) $(\alpha,\beta)$-PCP Algorithm}
\label{subsec: lower bound PCP}
For $(\alpha,\beta)$-\textsc{SORT} (cf. Table~\ref{table: beta results}), a lower bound of $2$ was established by \cite{albers2021explorable}. By following a similar proof structure with adapted instance families, we derive the same lower bound for~$(\alpha,\beta)$-\textsc{PCP}. In particular, for the choice of parameters $\hat{\alpha}=\varphi$ and $\hat{\beta} = \nicefrac{\left(\varphi+\sqrt{\varphi\left(\varphi+ 4\right)}\right)}{2}$ as in Theorem~\ref{theorem competitive ratio beta}, Example~\ref{example: golden weighted algorithm} can be used to obtain a lower bound of $2\varphi -1 > 2.2360$ as for $\left(\varphi, 1 \right)$-\textsc{PCP}.

\begin{lemma}
    \label{lemma: lower bounds}
    The (\textsc{Weighted}) $(\alpha,\beta)$-\textsc{PCP} algorithm cannot be better than $2$-competitive for any combination of $\alpha\geq 1$ and $\beta \geq 0$ for scheduling with testing on a single machine to minimize the weighted total completion time.
\end{lemma}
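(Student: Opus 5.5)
Since the unweighted problem is the special case $w_j=1$, the plan is to exhibit, for every pair $(\alpha,\beta)$, an instance family (mostly with unit weights, using weights only where they help) whose ratio $ALG/OPT$ tends to at least $2$. The argument is a case analysis on the parameters. In each case I let $n\to\infty$ or $\varepsilon\to 0$, and I choose $\varepsilon$-perturbations so that the algorithm's order is forced regardless of tie-breaking.

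\emph{Case $\alpha\ge 2$ (wrong non-testing).} Take $n$ unit-weight jobs with $t_j=1$, $u_j=\alpha-\varepsilon$ and $p_j=0$. They are not tested, so the algorithm runs them back to back with total cost $(\alpha-\varepsilon)\,n(n+1)/2$. The optimum tests each job and obtains $n(n+1)/2$. The ratio is therefore $\alpha-\varepsilon$, which tends to at least $2$.

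\emph{Case $\beta\le 1$ (tests run too early).} Take $n$ unit-weight jobs with $t_j=1$, $u_j=\alpha$ (so all are tested) and $p_j=\varepsilon$. The testing priority $\beta t_j\le 1<1+\varepsilon$ is smaller than every tested-execution priority, so all $n$ tests are scheduled before any execution. Every job then completes after time $n$, giving $ALG\ge n^2$. The optimum tests and immediately executes each job, giving roughly $n(n+1)/2$, so the ratio tends to $2$.

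\emph{Case $\alpha<2$ and $1<\beta<1+\alpha$ (long tested jobs).} Take $t_j=1$, $u_j=p_j=\alpha$. The testing priority $\beta$ is below the execution priority $1+\alpha$, so again all tests come first: $ALG\approx n^2+\alpha n(n+1)/2$, while $OPT=\alpha n(n+1)/2$. The ratio tends to $1+2/\alpha>2$.

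\emph{Remaining regime $\alpha<2$, $\beta\ge 1+\alpha$ (tests delayed too long).} This is where I expect the main obstacle. The idea is to exploit that an untested job with $u_j/w_j$ just below $\beta t_k/w_k$ is scheduled before the test of a job $k$ that turns out to have $p_k=0$, so that $p_k^*=t_k$ is tiny. Untested jobs can be enforced by choosing $u_j<\alpha t_j$.

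My first attempt is the two-class family: $n$ unit-weight tested jobs with $t=1$, $u=\alpha$, $p=0$, together with one untested job of weight $W=cn$ and size $u=\beta W-\varepsilon$. Optimizing over $c$ gives a ratio growing like $\sqrt\beta$. This suffices for large $\beta$, but at $\beta\approx 1+\alpha\approx 2$ it gives only about $4/3$.

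So I would refine the construction to interleave the two effects. I would place many untested jobs of size just below $\beta$, each of which delays a large batch of zero-length tested jobs. I would also mix in tested long jobs (with $p_j=u_j$), whose executions, with priority $1+\alpha\le\beta$, are run consecutively. The weights would be chosen so that the pairwise bound of Lemma~\ref{lemma: test amortization}, namely $\max\{\beta,(1+1/\alpha)(1+1/\beta)\}\ge 2$, is nearly attained simultaneously by a $\Theta(n^2)$ fraction of pairs. The delicate step is to balance these classes so that the ratio reaches $2$ at the boundary $\beta\approx 1+\alpha$ for all $\alpha\in[1,2)$, in particular near $\alpha=1$, $\beta=2$. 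I would first work out this boundary point and then check that the resulting bound is monotone in $\beta$.
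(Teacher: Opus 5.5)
Your first three cases are correct and essentially match the paper's cases 1 and 2. The paper handles all $0\le\beta<1+\alpha$ with the single family $t_j=1$, $u_j=p_j=\alpha$, so your separate $\beta\le 1$ case is unnecessary but harmless. The gap is the last regime, $1\le\alpha<2$ and $\beta\ge 1+\alpha$. There your heavy-untested-job family gives only about $4/3$ near $\beta=2$, as you observe yourself. What follows is a plan, not an instance with a computed ratio, so the lemma is not established in this regime.

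The idea you are missing is that weights are not needed. For $\beta\ge 1+\alpha$ the delay factor is so large that wrongly tested long jobs can be tested \emph{and executed} before the short jobs are even tested. The paper uses $n$ long jobs with $t_j=1-\varepsilon$, $u_j=p_j=\beta-1$, and $n$ short jobs with $t_j=1$, $u_j=\alpha$, $p_j=0$, all of unit weight.

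\begin{itemize}
\item The long jobs are tested, because $\alpha(1-\varepsilon)\le\alpha\le\beta-1$.
\item Their priorities satisfy $\beta(1-\varepsilon)<\beta-\varepsilon<\beta$. So the algorithm performs all long tests, then all long executions, and only then tests each short job and executes it immediately (priority $1$).
\item The optimum runs the short jobs first and the long jobs untested.
\end{itemize}

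This gives $ALG/OPT\to(3\beta/2+1-2\varepsilon)/(\beta/2+1)$ as $n\to\infty$. As $\varepsilon\to 0$ this tends to $3-4/(\beta+2)\ge 2$, using $\beta\ge 1+\alpha\ge 2$.

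Both design choices matter:
\begin{itemize}
\item \textbf{Long-job size.} Suppose the long jobs had $u_j=p_j=\alpha$, as your sketch suggests (execution priority about $1+\alpha$). Then for $\beta>1+\alpha$ each long job is executed right after its own test. The ratio is then only $(4+3\alpha)/(3+\alpha)$, e.g.\ $7/4$ at $\alpha=1$.
\item \textbf{The $\varepsilon$ in $t_j$.} Without it, the test priorities of the two classes tie at $\beta$. A tie-breaking rule that tests the short jobs first would then escape the construction.
\end{itemize}
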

\begin{proof}
     We distinguish three parameter combinations of $\alpha\geq 1$ and~$\beta\geq 0$ and construct instance families for each case. The constructions are similar to those for the $(\alpha,\beta)$-\textsc{SORT} in~\cite{albers2021explorable}, but adapted to the priority structure of the $(\alpha,\beta)$-\textsc{PCP}.
    \begin{enumerate}
        \item $\alpha > 2$: Following Lemma~\ref{lemma: single job testing}, consider the single job instance with $$t_1 = 1, u_1=\alpha-\varepsilon, p_1=0, w_1=1\ .$$ 
            $(\alpha,\beta)$-\textsc{PCP} does not test the job, whereas the optimal offline algorithm does, yielding a ratio of~$\nicefrac{ALG}{OPT}=\alpha-\varepsilon>2$ for small $\varepsilon>0$.
        \item $1 \leq \alpha \leq 2$ and $0 \leq \beta < 1+ \alpha$:
            In this case, we design instances as in Example~\ref{example: golden weighted algorithm} where $(\alpha,\beta)$-\textsc{PCP} tests jobs that should not be tested, and in addition, schedules all tests before any execution, causing high delay costs. Consider the family of instances consisting of $n$ jobs with
            $$t_j=1, u_j= \alpha, p_j=\alpha, w_j=1 \textup{ for } j=1,\ldots,n\ .$$ 
            
            Note that $(\alpha,\beta)$-\textsc{PCP} tests all jobs, because $\nicefrac{u_j}{t_j}\geq \alpha$ for each job~$j$, and schedules all tests first, since~$\beta t_j = \beta  < 1+ \alpha  = t_k + p_k$ for all jobs~$j,k$. Therefore, the algorithm's objective value amounts to $ALG=n^2 + \nicefrac{\alpha n(n+1)}{2}$.
            
            The optimal offline algorithm does not test any job, and obtains an objective value of $OPT=\nicefrac{\alpha n(n+1)}{2}$. For $n\rightarrow \infty$, we get a competitive ratio of $\nicefrac{ALG}{OPT}\rightarrow 1+\nicefrac{2}{\alpha}\geq 2$, since $1 \leq \alpha \leq 2$.
        \item $1 \leq \alpha \leq 2$ and $\beta \geq 1+ \alpha$:
        Consider a family of instances with two types of jobs and small $\varepsilon>0$:
        \begin{equation*}
            \begin{aligned}
                &\textup{(\emph{long}) } &&t_j=1-\varepsilon, u_j= \beta -1, p_j=\beta-1, w_j=1 &&\textup{ for } j=1,\ldots,n,\\
                 &\textup{(\emph{short}) } &&t_j=1, \hspace{4ex}u_j= \alpha, \hspace{4ex}p_j=0, \hspace{4.5ex}w_j=1 &&\textup{ for } j=n+1,\ldots,2n\ .
            \end{aligned}
        \end{equation*}
        In these instances, several incorrect decisions of the $(\alpha,\beta)$-\textsc{PCP} accumulate, especially given the relatively large testing delay. 
        Both $(\alpha,\beta)$-\textsc{PCP} and the optimal offline algorithm test short jobs by construction. Long jobs are not tested by the optimal offline algorithm, but $(\alpha,\beta)$-\textsc{PCP} does, since $\alpha t_j\leq \alpha \leq \beta -1 = u_j$ for $j=1,\ldots, n$.
        The $(\alpha,\beta)$-\textsc{PCP} algorithm schedules the tests of the long jobs first (priority value $\beta(1-\varepsilon)$), then it executes them (priority value $1-\varepsilon+\beta-1=\beta-\varepsilon$), and afterwards it tests and executes the short jobs directly after their respective tests (priority values~$\beta$ and~$1$). The high testing delay causes long jobs to be scheduled even before short jobs are tested. Including the parameter~$\varepsilon$ enforces the wrong ordering, but it can be neglected if ties are broken as described.
        
        The optimal offline algorithm tests and executes all short jobs first, then executes the long jobs untested. Overall, it follows
        \begin{equation*}
            \begin{aligned}
                \frac{ALG}{OPT} &= \frac{(1-\varepsilon)n^2 + (\beta-1)\frac{n(n+1)}{2} + (\beta-\varepsilon)n^2 + \frac{n(n+1)}{2}}{\frac{n(n+1)}{2} + n^2 + (\beta-1)\frac{n(n+1)}{2}} \\
                &\stackrel{n\rightarrow\infty}{\longrightarrow} \frac{\frac{3\beta}{2}+1-2\varepsilon}{\frac{\beta}{2}+1}\stackrel{\varepsilon\rightarrow 0}{\longrightarrow} 3- \frac{4}{\beta+2}\stackrel{\beta\geq 2}{\geq} 2\ .
            \end{aligned}
        \end{equation*}   
    \end{enumerate} 
\end{proof}
    Note that for the lower bound of the (\textsc{Weighted}) $(\alpha,\beta)$-\textsc{SORT} one may replace $0\leq \beta < 1+ \alpha$ by $0\leq \beta < \alpha$ in case~$2$ of the previous proof. Consequently, we may replace $\beta \geq 1+ \alpha$ by $\beta \geq \alpha$ and $u_j=p_j=\beta-1$ by $u_j=p_j=\beta-\varepsilon$ for the long jobs $j=1,2,\ldots, n$ in case~$3$, similar as in \cite{albers2021explorable}.

Interestingly, this lower bound matches the competitive ratio of $2$ of the \textsc{Threshold} algorithm from~\cite{durr2020adversarial} for $w_j=t_j=1$. This algorithm coincides with the $(2,2)$-\textsc{SORT} algorithm on those special instances. Therefore, the above results show that even with unit weights and testing times, $(\alpha,\beta)$-\textsc{SORT} and $(\alpha,\beta)$-\textsc{PCP} cannot achieve a better competitive ratio for any choice of parameters $\alpha\geq 1$ and $\beta \geq 0$.

\subsection{A Randomized Algorithm for $\sum w_jC_j$}
\label{subsec: best-known randomized algo}
An idea previously used when designing randomized algorithms for scheduling with testing is to randomize the testing decisions in the beginning and then use a deterministic algorithm for the choice of the next task to be scheduled, see~\cite{albers2021explorable,gong2024approximation,liu2023power}. One typically uses a non-decreasing function $f: \mathbb{R}_{> 0}\rightarrow \left[ 0,1\right]$ that maps the ratio $r_j=\nicefrac{u_j}{t_j}$ to a probability $f(r_j)$ for testing job~$j$. The randomized algorithm from~\cite{liu2023power} for the unweighted case uses the following function:
\begin{equation}
    \label{eq: prob function}
    f(x) : = \left \{ \begin{array}{cc}
        0, & x < 1, \\
        \frac{x^2 - x}{x^2 - \frac{4}{3} x +1}, & 1\leq x \leq 3,  \\ 
         1, & x \geq 3 \ .
    \end{array}\right.
\end{equation}
They give a worst-case analysis that provides such a function based on a general~$\beta$ and set $\beta=2$ to achieve the best-possible bound within their analysis. For better readability, we only use the optimized version here to show how it can be generalized to the weighted case. We define the algorithm \textsc{Weighted} $(f, \beta)$-\textsc{PCP}, as the one that tests job~$j$ with probability $ f(r_j)$ and then always schedules the task with minimal priority value (either $\nicefrac{1}{w_j}\cdot\beta \cdot t_j$ or $\nicefrac{1}{w_j}\cdot u_j$ or $\nicefrac{1}{w_j}\cdot \left(t_j + p_j\right)$ depending on the task's type) among the available tasks.

\begin{theorem}
    \label{theorem: randomized bound single machine} 
    The randomized algorithm \textsc{Weighted} $(f, 2)$-\textsc{PCP} achieves a competitive ratio of at most $\nicefrac{3(7+3\sqrt{6})}{20}\leq 2.1523$.
\end{theorem}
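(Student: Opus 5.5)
We follow the same pairwise-contribution scheme as in the proof of Theorem~\ref{theorem competitive ratio beta}, now taken in expectation over the independent random testing decisions. Write the expected objective as
\[
E\Big[\sum_j w_j C_j^A\Big]=\sum_j \Big(w_j E[c(j,j)] + \sum_{k<_o j} E\big[w_j c(k,j)+w_k c(j,k)\big]\Big).
\]
It then suffices to show two bounds with $\rho:=\nicefrac{3(7+3\sqrt6)}{20}$. First, $w_j E[c(j,j)] = w_jE[p_j^A]\le \rho\, w_j p_j^*$. Second, for every pair with $k<_o j$ we have $E[w_jc(k,j)+w_kc(j,k)]\le \rho\, w_j p_k^*$. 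Summing these as in \eqref{eq: competitive ratio PCP} gives the theorem.

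The first bound follows from Lemma~\ref{lemma: randomized single job testing}-type reasoning: by linearity, $E[p_j^A]=f(r_j)(t_j+p_j)+(1-f(r_j))u_j$, and this is at most $\rho p_j^*$ for the given $f$, which one checks directly. For the pairwise bound we condition on job $j$'s testing decision and on the realized order. Lemma~\ref{lemma: test amortization} with $\beta=2$ is a deterministic case analysis (Tables~\ref{table:no test on k} and~\ref{table:test on k}) whose inequalities hold for arbitrary weights, since they only compare weighted priority values. It therefore gives, for fixed decisions of $j$:
\[
w_j c(k,j)+w_kc(j,k)\le w_j\cdot\tfrac32 u_k \quad\text{if } k \text{ is untested},
\]
\[
w_j c(k,j)+w_kc(j,k)\le w_j\max\{2t_k,\,2t_k+p_k,\,\tfrac32(t_k+p_k)\} \quad\text{if } k \text{ is tested}.
\]
Both bounds depend only on job $k$'s data, and $k$'s decision is independent of $j$'s. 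Taking expectation over $k$'s decision therefore yields
\[
E\big[w_jc(k,j)+w_kc(j,k)\big]\le w_j\Big((1-f(r_k))\tfrac32 u_k + f(r_k)\max\{2t_k+p_k,\tfrac32(t_k+p_k)\}\Big).
\]
Note that $2t_k\le 2t_k+p_k$, so the first term in the maximum can be dropped. The weights thus factor out exactly as in the unweighted analysis of~\cite{liu2023power}.

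What remains is a purely one-job calculation. Normalize $t_k=1$, $u_k=r\ge1$, $p_k=p\in[0,r]$ (for $r<1$ we have $f=0$ and the bound is $\tfrac32 u_k\le\tfrac32 p_k^*$). We must show
\[
g(r,p):=(1-f(r))\tfrac32 r+f(r)\max\{2+p,\tfrac32(1+p)\}\le\rho\min\{1+p,r\}.
\]
The maximum equals $2+p$ exactly when $p\le1$. For fixed $r$, the left side is piecewise linear and nondecreasing in $p$, and the right side is concave piecewise linear. The worst case is therefore attained at a breakpoint: $p=0$, $p=1$, $p=r-1$, or $p=r$. Each yields a one-variable rational inequality in $r$ for $r\in[1,3]$ using $f(r)=\nicefrac{(r^2-r)}{(r^2-\tfrac43r+1)}$, and for $r\ge3$ using $f=1$. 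For $r\ge3$ the left side is at most $\max\{2+p,\tfrac32(1+p)\}$ and the right side is at least $\rho(1+p)\cdot\min\{1,\nicefrac{r}{(1+p)}\}$, which is easy to check.

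The main obstacle is this final verification, namely identifying that the maximum over $r\in[1,3]$ of the binding ratio (presumably at $p=0$, comparing $(1-f)\tfrac32 r+2f$ against $\min\{1,r\}=1$, balanced with the $p=r$ case against $r$) equals exactly $\rho$. I would first maximize $h(r)=\tfrac32 r(1-f(r))+2f(r)$ by setting $h'(r)=0$. I expect the optimum to involve $\sqrt6$, producing $\nicefrac{3(7+3\sqrt6)}{20}$. I would then check that the other breakpoints and $E[p_j^A]/p_j^*$ stay below this value, relying on the unweighted computations in~\cite{liu2023power}, since these one-job inequalities are identical to theirs.
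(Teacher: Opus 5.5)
Your proposal is correct and follows essentially the paper's proof: the pairwise decomposition in expectation, Lemma~\ref{lemma: test amortization} with $\beta=2$ applied realization-wise (its case bounds hold for any testing decisions), and a one-job bound. Your breakpoint check actually collapses to the endpoints $p=0$ and $p=r$, since the left side is convex in $p$, and these are exactly the paper's two bounds $\tfrac32 r+f(r)\bigl(2-\tfrac32 r\bigr)$ and $\tfrac32+\tfrac{3f(r)}{2r}$. The computation you left as an expectation is short: $f$ is chosen so that both equal $\frac{9r^2-3r}{6r^2-8r+6}$, whose maximum on $[1,3]$ is at $r=1+\sqrt{2/3}$ with value $\frac{3(7+3\sqrt6)}{20}$. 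The self-term needs no separate check, because $E[p_j^A]$ is termwise dominated by $\tfrac32u_j(1-f(r_j))+\max\{2t_j+p_j,\tfrac32(t_j+p_j)\}f(r_j)$.
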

\begin{proof}
    Using the randomized algorithm \textsc{Weighted} $(f, 2)$-\textsc{PCP} and following the proof ideas and structure for the unweighted case \cite{liu2023power}, the expected weighted total completion time can be bounded by
    \begin{equation*}
        \begin{aligned}
            E\Big[ \sum_j w_j \cdot C_j^A\Big] = \sum_{j} \bigg(E\left[w_j  \cdot c(j,j)\right] + E\Big[ \sum_{k<_o j} w_j \cdot c(k,j)+w_k \cdot c(j,k)\Big] \bigg)\\
            \leq \sum_{j}w_j \cdot \sum_{k\leq_o j} \underbrace{\frac{3}{2} u_k \left( 1- f\left(r_k\right)\right) + \max\Big\lbrace 2t_k+p_k,\frac{3}{2}\left(t_k+p_k\right)\Big\rbrace f\left(r_k\right)}_{=: h(r_k)},
        \end{aligned}
    \end{equation*}
    where the inequality follows from Lemma~\ref{lemma: test amortization} for $\beta=2$ for the second expectation, and since the first expectation $E\left[w_j \cdot c(j,j)\right] = w_j u_j \left( 1- f\left(r_j\right)\right) + w_j\left(t_j+p_j\right)f\left(r_j\right)$ is clearly also bounded by $w_j\cdot h(r_j)$. Overall, $h(r_k)$ can be bounded based on a case distinction on~$r_k$ and on $p_k^*\in\lbrace u_k, t_k + p_k\rbrace$ as follows, similar to \cite{liu2023power}.
    Firstly, for $r_k< 1$, we have $f(r_k)=0$ and $h(r_k)=\nicefrac{3}{2}\cdot u_k=\nicefrac{3}{2} \cdot p_k^*$. 
    Secondly, for~$r_k\geq 3$ using $f\left(r_k\right)=1$, we obtain $h(r_k)\leq 2\cdot\left(t_k+p_k\right)$ and
    \begin{equation*}
        \begin{aligned}
        h(r_k) = \max\Big\lbrace 2\frac{u_k}{r_k} + p_k, \frac{3}{2}\big(\frac{u_k}{r_k} + p_k\big) \Big\rbrace \leq \max\Big\lbrace u_k\big( \frac{2}{3} + 1\big), \frac{3}{2}u_k\big(\frac{1}{3}+1\big) \Big\rbrace  \leq 2\cdot u_k ,  
        \end{aligned}
    \end{equation*}
    hence $h(r_k)\leq 2\cdot p_k^*$.
    Finally, consider $1\leq r_k\leq 3$. We obtain

    \begin{equation*}
        \begin{aligned}
            \frac{h(r_k)}{t_k+p_k}\leq \frac{3}{2}r_k\left( 1-f(r_k)\right) + 2 f(r_k) = \frac{3}{2}r_k + f(r_k)\Big(2-\frac{3}{2}r_k\Big)
        \end{aligned}
    \end{equation*}
    and using $r_k\geq 1$ in the second step of the next calculation
   \begin{equation*}
        \begin{aligned}
            \frac{h(r_k)}{u_k}&\leq \frac{3}{2}\left( 1-f(r_k)\right) +\max\Big\lbrace \frac{2}{r_k} +1, \frac{3}{2}\big(\frac{1}{r_k}+1\big)\Big\rbrace f(r_k) \\
            &= \frac{3}{2}\left( 1-f(r_k)\right) + \frac{3}{2}\big(1+\frac{1}{r_k}\big)f(r_k) = \frac{3}{2} + f(r_k)\frac{3}{2r_k}\ .
        \end{aligned}
    \end{equation*}
    Therefore, we have 
   \begin{equation*}
       \begin{aligned}
        h(r_k)\leq\max\Big\lbrace \frac{3}{2}r_k + f(r_k)\big(2-\frac{3}{2}r_k\big) ,\frac{3}{2} + f(r_k)\frac{3}{2r_k} \Big\rbrace\cdot p_k^*\ .
       \end{aligned}
   \end{equation*}
    For each $r_k\in\left[1,3\right]$, $f(r_k)\in \left[0,1\right]$ as defined in Equation~\ref{eq: prob function} is derived by minimizing the previous factor for a general testing probability. Observing that the second term in the maximum grows strictly faster in the testing probability than the first term, one considers the intersection, and for $f(r_k)$, the two terms within the maximum are equal to $\nicefrac{(9r_k^2-3r_k)}{(6r_k^2-8r_k+6)}$. This expression is largest for $r_k= 1 + \sqrt{\nicefrac{2}{3}}$ yielding $h(r_k) \leq \nicefrac{3(7+3\sqrt{6})}{20} \cdot p^*_k \leq  2.1523 \cdot p^*_k$.
\end{proof}

\section{Parallel Machines}
\label{section: Parallel Machines}
Combining ideas from single-machine algorithms with list scheduling yields algorithms for scheduling on identical parallel machines. One essentially keeps the same task order as in the single-machine case (i.e., that of the \textsc{Weighted} $(\alpha,\beta)$-PCP algorithm) and schedules the next available task once a machine idles, albeit respecting that a tested-execution task is only available after the completion of a testing task of the same job, similar to \cite{gong2024approximation,liu2024parallel} for minimizing the total completion time, see Algorithm~\ref{algo: parallel machines}. It is important to note that a tested-execution task is placed at the relative position in the priority list it would have had in the single-machine case and is not simply reinserted with its priority value. The overall analysis is similar to proving a performance ratio of $\nicefrac{3}{2}$ for \textsc{WSPT} on identical parallel machines in the offline setting based on the lower bounds from~\cite{eastman1964bounds}, cf.~\cite{mohring1999approximation}. For the setting considered here, we obtain a deterministic algorithm with a competitive ratio of~$2.7763$, replicating the performance guarantees in~\cite{liu2024parallel} where the unweighted case was studied. Furthermore, we obtain a randomized $2.5110$-competitive algorithm improving upon the so far best-known bound for randomized algorithms on parallel machines of $2.8307$ for the unweighted case from~\cite{gong2024approximation}.

\begin{algorithm}[htb]
    \caption{Weighted $(\alpha,\beta)$-PCP on Identical Parallel Machines}
    \label{algo: parallel machines}
    \KwIn{Jobs $\calJ$, a threshold $\alpha\geq 1$, a delay-factor $\beta>0$, $m$ identical parallel machines.}
    \KwOut{A feasible test-preemptive schedule on $m$ identical parallel machines.}
    \BlankLine
    Jobs to be tested: $\calJ_T := \lbrace j \in \calJ \st \alpha t_j \leq  u_j \rbrace$; Jobs left untested $\calJ_U := \calJ\setminus\calJ_T$;
    \\$L:=$ Empty priority list (keeping the single-machine order of unfinished tasks as in Algorithm~\ref{algo: weighted pcp});  
    \\Add tasks for jobs from $\calJ_T$ with $\sigma_j =\nicefrac{\beta t_j}{w_j}$, from $\calJ_U$ with $\sigma_j =\nicefrac{u_j}{w_j}$ to $L$; 
    \\Initialize $\tau:=0$ as the current time; 
    \BlankLine
    \While{$L\neq\emptyset$ or some machine is busy} 
    {
        Let $m'\leq m$ be the number of machines that are idle from $\tau$ onwards; 
        \\Schedule the first $\min\lbrace m', |L|\rbrace$ tasks in $L$ on those machines starting at time $\tau$ and remove them from $L$;
        \\Set $\tau$ to the next completion time of any running task;
        \\\For{each testing task of a job $j$ that is completed at $\tau$}{Reinsert a task into $L$ using priority value $\sigma_j=\max\lbrace\nicefrac{\beta t_j}{w_j}, \nicefrac{\left(t_j + p_j\right)}{w_j}\rbrace$, i.e., at the position it would occupy in a single-machine order, tie-breaking arbitrarily.
        }
    }
\end{algorithm}
Preserving the single-machine order for the priority list in Algorithm~\ref{algo: parallel machines} allows for a simple proof of Lemma~\ref{lemma: from single to parallel machines}.
\begin{lemma}
    \label{lemma: from single to parallel machines}
    The completion time~$C_j$ of a job~$j$ in the schedule constructed by the list scheduling approach in Algorithm~\ref{algo: parallel machines} can be bounded by:
	\begin{equation}
        \label{eq: C_j from single to parallel machines}
		C_j \leq \frac{1}{m} \sum_{k\not= j} c(k,j) + p_j^A,
	\end{equation}
    where $c(k,j)$ is the contribution of job~$k$ to the completion time of job $j$ in the single-machine schedule constructed by the \textsc{Weighted} $(\alpha,\beta)$-PCP algorithm with the same tie-breaking rule.
\end{lemma}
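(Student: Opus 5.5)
The plan is the classical list-scheduling argument of Eastman, Even and Isaacs, adapted to the two-task structure of tested jobs. Let $S_j$ be the start time of the \emph{last} task of job~$j$ in the parallel schedule. This is the untested-execution task $\upsilon_j$, or the tested-execution task $\pi_j$ if $j$ is tested. It suffices to show
\[
S_j \le \frac{1}{m}\sum_{k\neq j} c(k,j) + \bigl(p_j^A - q_j\bigr),
\]
where $q_j$ is the length of that last task. Since $C_j = S_j + q_j$, this gives~\eqref{eq: C_j from single to parallel machines}.

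\textbf{Step 1: define the set $B_j$.} Let $B_j$ be the set of all tasks that precede the last task of $j$ in the single-machine order of Algorithm~\ref{algo: weighted pcp}. Because Algorithm~\ref{algo: parallel machines} keeps the single-machine relative order in $L$, every task outside $B_j$ (other than $j$'s own tasks) lies behind $j$'s last task in $L$. The total length of the tasks of other jobs in $B_j$ is exactly $\sum_{k\neq j} c(k,j)$, since in the single-machine schedule these are precisely the tasks processed before $j$ completes. The only task of $j$ itself in $B_j$ is $\tau_j$ if $j$ is tested, of length $t_j = p_j^A - q_j$, and no task otherwise, in which case $p_j^A - q_j = 0$.

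\textbf{Step 2: bound $S_j$ when $j$ is untested.} Here $\upsilon_j$ is in $L$ from time $0$. At every time $\theta < S_j$ at which some machine is idle, list scheduling would have started the first available task of $L$. A task behind $\upsilon_j$ in the list cannot have been started while $\upsilon_j$ waits, since $\upsilon_j$ would have been taken first. So at every time before $S_j$ all $m$ machines are busy, and they are busy only with tasks of $B_j$. This is the step that needs care. A tested-execution task $\pi_k$ appears in $L$ late, but it is reinserted at its single-machine position, so a task processed before $S_j$ is either in $B_j$ or was started while $\upsilon_j$ was not at the front. The latter is impossible whenever a machine was free. Hence $m S_j \le \sum_{k\neq j} c(k,j)$.

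\textbf{Step 3: bound $S_j$ when $j$ is tested.} Split $[0,S_j)$ at the completion time $C_{\tau_j}$ of the test.
\begin{itemize}
\item On $[0, C_{\tau_j})$, apply the Step~2 reasoning to $\tau_j$ instead of $\upsilon_j$. Before $\tau_j$ starts, all machines are busy with tasks preceding $\tau_j$ in the order, hence with tasks of $B_j$. During $\tau_j$ itself we charge its length $t_j$ directly, not divided by $m$.
\item On $[C_{\tau_j}, S_j)$, the task $\pi_j$ is in $L$ at its single-machine position. Again all machines are busy, and only with tasks preceding $\pi_j$.
\end{itemize}
Together this shows that every instant of $[0,S_j)$ either has all $m$ machines working on tasks of $B_j$ other than $\tau_j$, or lies inside the processing interval of $\tau_j$. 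Therefore $S_j \le \frac{1}{m}\sum_{k\neq j} c(k,j) + t_j$.

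\textbf{Main obstacle.} The hard part is making rigorous that no task \emph{outside} $B_j$ occupies a machine before $S_j$. There is a subtle issue: a task positioned later than $\pi_j$ could be started before $\tau_j$ completes, when $\pi_j$ is not yet in $L$. So the busy-period claim on $[0,S_j)$ must be stated more carefully. My first approach is to count only the time during which some machine is idle or works on a non-$B_j$ task, and to show all such instants lie in $[S_{\tau_j}, C_{\tau_j})$, the interval during which $\tau_j$ runs.

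Before $\tau_j$ starts, $\tau_j$ itself waits in $L$, so any started task precedes it and lies in $B_j$. After $\tau_j$ completes, $\pi_j$ waits, so again only tasks of $B_j$ can be started. Tasks started earlier but still running could leak outside $B_j$ only if they were started during the run of $\tau_j$. I would handle these by noting that the leaking portion lies within $[S_{\tau_j}, C_{\tau_j})$ of machine time, which is already charged as the additive $t_j$. If this fails, I would fall back to comparing the position of tasks via their priority values $\max\{\beta t_k/w_k,(t_k+p_k)/w_k\}$, which are at least $\beta t_k/w_k$.
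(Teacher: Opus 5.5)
Your Steps 1 and 2, and the accounting you aim for in Step 3, are sound. That accounting charges the run of $\tau_j$ with $t_j$, and every other instant of $[0,S_j)$ with a $\frac{1}{m}$-share of busy time on $B_j\setminus\{\tau_j\}$. The gap is the interval $[C_{\tau_j},S_j)$, which you flag but do not close.

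That $\pi_j$ waits in $L$ during this interval only shows that tasks \emph{started} there precede $\pi_j$. Tasks started while $\tau_j$ was running may lie behind $\pi_j$ and still occupy machines after $C_{\tau_j}$. Your repair, that the leaking portion lies within $[S_{\tau_j},C_{\tau_j})$, is false. Take $m=2$ and $L=[\tau_j,\upsilon_x]$ at time $0$, with $\upsilon_x$ long and placed behind $\pi_j$ in the single-machine order. Both tasks start at time $0$, and $\upsilon_x$ keeps running long after $C_{\tau_j}$. Comparing the priority values $\max\{\beta t_k/w_k,(t_k+p_k)/w_k\}$ does not help either. The issue is not where tasks sit in $L$ but which tasks are still running.

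The missing idea is the paper's third case (violations of the single-machine order): a violation forces $\pi_j$ to start exactly when its test ends, so the problematic interval is then empty. Suppose a task $x$ behind $\pi_j$ starts at time $s_x<S_j$.
\begin{itemize}
\item Then $s_x<C_{\tau_j}$. Otherwise $\pi_j$ would already sit in $L$ ahead of $x$ at time $s_x$ and would be taken no later than $x$.
\item After the scheduling step at $s_x$, only tasks behind $x$ remain in $L$. Afterwards, only reinserted tested-execution tasks enter $L$, and at each event time there are at most as many of them as machines freed by the tests ending then.
\item By induction over event times, every reinserted task ahead of $x$ lands at the front of $L$ with a free machine available and is started at once. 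So $L$ again holds only tasks behind $x$.
\item At the event time $C_{\tau_j}$ this gives $S_j=C_{\tau_j}$, hence $C_j=S_{\tau_j}+t_j+p_j$. Your Step~2 argument applied to $\tau_j$ then gives $S_{\tau_j}\le\frac{1}{m}\sum_{k\neq j}c(k,j)$.
\end{itemize}
If no such $x$ exists, every task running in $[C_{\tau_j},S_j)$ was started before $S_j$ and therefore precedes $\pi_j$, so your busy-machine claim holds there. With this dichotomy added, your argument goes through.
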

\begin{proof} 
 The key observation of the proof is that list scheduling assigns a task to the currently least-loaded machine or a tested-execution task immediately after the completion of its testing task. In the following, let Algorithm~\ref{algo: parallel machines} use the same (arbitrary, but fixed) tie-breaking rule as Algorithm~\ref{algo: weighted pcp} when simulating the single-machine order.
 
 For the proof, we need the notion of a \emph{violation of the single-machine order}, c.f. \cite{gong2024approximation}, which occurs for a task that is only started on one of the machines after another task that appears later in the single-machine order. This can only occur for tested-execution tasks that are not initially available. While a testing task of job~$j$ is ongoing, another task that appears later in the single-machine order than the tested-execution task of job~$j$ may be scheduled on another machine. Let us distinguish three cases depending on the testing decision for a job and whether violations of the single-machine order occur.
 
 If job $j$ is not tested, the corresponding task is initially available. Therefore, only tasks preceding~$j$ in the single-machine order may contribute to its starting time. Since $j$ is put on the least-loaded machine at that point in time, its starting time is at most $\nicefrac{1}{m} \sum_{k\not= j} c(k,j)$. Since job~$j$ is scheduled non-preemptively, its completion time is at most $\nicefrac{1}{m} \sum_{k\not= j} c(k,j) + p_j^A$. 
 
 Consider a job $j$ that is tested, and for which no violation of the single-machine order occurs, i.e., no task that appears after the tested-execution task of~$j$ in the single-machine order is started earlier in the parallel-machine schedule. We distinguish two subcases. If the tested-execution task directly follows the testing, the completion time is bounded by the starting time of the testing task plus $p_j^A$. Applying the argument of the previous case to the testing task yields the bound. Otherwise we obtain $C_j\leq \nicefrac{1}{m} \left(\sum_{k\not= j} c(k,j) + t_j\right) + p_j$. Using $p_j^A=t_j+p_j$ and $m\geq 1$ yields the stated bound.

Finally, consider a set of tested-execution tasks whose testing tasks complete at time $\tau'$ and for which violations of the single-machine order occur. By construction, if a task appearing later in the single-machine order has already been scheduled before, no other task with a smaller priority can be present in the priority list $L$ at time $\tau'$, since it would have been scheduled before this violating task otherwise. Thus, Algorithm~\ref{algo: parallel machines} immediately schedules all these tasks again. Then, the bound holds since one only needs to consider the contributions of tasks before the testing of $j$ as in the previous cases. 
\end{proof}

 For the underlying offline problem, and more generally for the preemptive identical-parallel-machine scheduling problem $P \st pmtn \st \sum w_j C_j$, there exist optimal schedules without preemption \cite{brucker2007scheduling}. Furthermore, the optimal offline algorithm uses the optimal processing time $p_j^*$ for each job~$j$.
 Since we do not have the concrete structure of the optimal solution to the non-preemptive problem $P \st \st \sum w_j C_j$ at hand, we use the following lower bound of \cite{eastman1964bounds} to derive an upper bound on the competitive ratio:
\begin{equation}    
    \label{eq: Eastman}
	OPT_m \geq \max \bigg\lbrace  \sum_j w_j p^*_j, \frac{1}{m} OPT_1 + \Big(\frac{1}{2}-\frac{1}{2m} \Big) \sum_j w_j p^*_j  \bigg\rbrace ,
\end{equation}
where $OPT_m$ is the optimal objective value on~$m$ machines and $OPT_1$ on a single machine. Equations~\eqref{eq: C_j from single to parallel machines} and~\eqref{eq: Eastman} were similarly used for the unweighted case in~\cite{gong2024approximation,liu2024parallel} and in many other scheduling contexts before.

\begin{theorem}
    \label{theorem: parallel machines proof}
	\textsc{Weighted} $\left(\alpha,\beta\right)$-\textsc{PCP}  with parameters as in Theorem~\ref{theorem competitive ratio beta} combined with list scheduling as in Algorithm~\ref{algo: parallel machines} yields a competitive ratio of at most $2.3166$ for $m=1,2$ and $2.7763 - \nicefrac{1.1582}{m}$ for~$m\geq 3$.
\end{theorem}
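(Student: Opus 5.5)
We start from Lemma~\ref{lemma: from single to parallel machines}. Multiplying \eqref{eq: C_j from single to parallel machines} by $w_j$ and summing over all jobs gives
\begin{equation*}
ALG_m \leq \frac{1}{m}\sum_j \sum_{k\neq j} w_j\, c(k,j) + \sum_j w_j p_j^A .
\end{equation*}
We now bound the pairwise term through the single-machine analysis. As in the proof of Theorem~\ref{theorem competitive ratio beta}, we group the double sum into pairs $k<_o j$ and apply Lemma~\ref{lemma: test amortization}. This yields $\sum_j\sum_{k\neq j} w_j c(k,j) \leq Q\cdot\sum_j\sum_{k<_o j} w_j p_k^*$, where $Q=Q(\hat\alpha,\hat\beta)=\hat\beta\approx 2.3166$. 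The right-hand side equals $Q\,(OPT_1-\sum_j w_jp_j^*)$, because $OPT_1=\sum_j\sum_{k\leq_o j}w_jp_k^*$.

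For the self-term we use Lemma~\ref{lemma: single job testing}, which gives $p_j^A\leq \max\{\varphi,1+1/\varphi\}\,p_j^*=\varphi p_j^*$. Writing $S:=\sum_j w_jp_j^*$, we get
\begin{equation*}
ALG_m\leq \frac{Q}{m}\,(OPT_1 - S) + \varphi S .
\end{equation*}

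Next we invoke the Eastman bound~\eqref{eq: Eastman}. From its second term, $\frac{1}{m}OPT_1\leq OPT_m-\bigl(\frac12-\frac1{2m}\bigr)S$. Substituting gives
\begin{equation*}
ALG_m\leq Q\,OPT_m + S\Bigl(\varphi - \frac{Q}{m} - Q\bigl(\tfrac12-\tfrac1{2m}\bigr)\Bigr)
= Q\,OPT_m + S\Bigl(\varphi-\frac Q2-\frac{Q}{2m}\Bigr).
\end{equation*}
We then split on the sign of the coefficient $c_m:=\varphi-\frac Q2-\frac Q{2m}$.

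If $c_m\leq 0$, the bound is $Q\,OPT_m$. This covers $m=1,2$: for $m=2$ we have $c_2=\varphi-\frac34 Q\approx 1.618-1.737<0$, and $m=1$ is immediate.

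If $c_m>0$, we use $S\leq OPT_m$ (the first term in the Eastman bound) and obtain the ratio
\begin{equation*}
Q+\varphi-\frac Q2-\frac Q{2m}=\frac Q2+\varphi-\frac{Q}{2m}.
\end{equation*}
Numerically, $\hat\beta/2+\varphi\approx 1.1583+1.6180=2.7763$ and $Q/2\approx 1.1582$, so this matches the claimed bound $2.7763-\frac{1.1582}{m}$ for $m\geq 3$, where $c_m>0$.

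The main obstacle is the first step. We must make sure the pairwise bound of Lemma~\ref{lemma: test amortization} still applies to the contributions $c(k,j)$ appearing in Lemma~\ref{lemma: from single to parallel machines}. Those contributions are taken in the single-machine schedule with the same tie-breaking, so Lemma~\ref{lemma: test amortization} applies verbatim. The one subtlety is the reinsertion priority $\max\{\beta t_j/w_j,(t_j+p_j)/w_j\}$, and we need to check that it reproduces exactly the single-machine order of Algorithm~\ref{algo: weighted pcp}. This holds because a tested-execution task is inserted only after its own testing task, which has priority $\beta t_j/w_j$.

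A second point to handle carefully is the self-term. We bound it separately by $\varphi p_j^*$ rather than by $Q p_j^*$, since the tighter constant is exactly what the $m\geq 3$ bound requires.
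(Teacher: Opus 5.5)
Your proposal is correct and follows the paper's proof essentially step by step: you sum the per-job bound of Lemma~\ref{lemma: from single to parallel machines}, bound only the off-diagonal contributions by $Q(\hat\alpha,\hat\beta)\,(OPT_1-\sum_j w_jp_j^*)$ via Lemma~\ref{lemma: test amortization} and the self-term by $\varphi\sum_j w_jp_j^*$, and then apply the two parts of the Eastman bound~\eqref{eq: Eastman} according to the sign of $\varphi-\frac{Q}{2}-\frac{Q}{2m}$. The one thing to tidy is the rounding, which should be one-sided, namely $Q/2+\varphi\approx 2.77629\le 2.7763$ and $Q/2\approx 1.15826\ge 1.1582$; these are exactly the inequalities the stated bound needs.
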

\begin{proof}
    We first provide a proof using general parameters $\alpha\geq 1$ and $\beta>0$ and then plug in the values as in Theorem~\ref{theorem competitive ratio beta}, $\hat{\alpha}$ and $\hat{\beta} = \nicefrac{\left(\varphi+\sqrt{\varphi\left(\varphi+ 4\right)}\right)}{2}$. For this proof, we apply the per-job completion time bound from Equation~\eqref{eq: C_j from single to parallel machines} in Lemma~\ref{lemma: from single to parallel machines} inherited from the list-scheduling framework, then we use the single-machine pairwise-contribution bound of jobs as established by Equation~\eqref{eq: competitive ratio PCP} in Theorem~\ref{theorem competitive ratio beta} and the job-wise bound from the testing decision as in Lemma~\ref{lemma: single job testing}. We rewrite the resulting terms to apply the lower bound from Equation~\eqref{eq: Eastman} twice, yielding terms parametrized by the number of machines~$m$ and parameters $\alpha$ and $\beta$.
	\begin{equation*}
		\begin{aligned}
			\sum_j w_j C_j 
            & \stackrel{\eqref{eq: C_j from single to parallel machines}}{\leq} \frac{1}{m} \sum_j w_j \sum_{k\not= j} c(k,j) + \sum_j w_j p_j^A \\
            & \hspace{-1.5ex}\stackrel{\eqref{eq: competitive ratio PCP},~\ref{lemma: single job testing}}{\leq} \frac{1}{m} Q(\alpha,\beta)\sum_j w_j \sum_{k\not= j} c^*(k,j) + \max\lbrace \alpha,1+\nicefrac{1}{\alpha}\rbrace \sum_j w_j p_j^* \\
            &= \frac{1}{m} Q(\alpha,\beta) OPT_1 + \Big(\max\lbrace \alpha,1+\nicefrac{1}{\alpha}\rbrace-\frac{Q(\alpha,\beta)}{m}\Big)\sum_j w_j p_j^* \\            
            & = Q(\alpha,\beta)\bigg(\frac{1}{m} OPT_1 + \Big(\frac{1}{2}-\frac{1}{2m}\Big)\sum_j w_j p^*_j\bigg) \\
            &\hspace{1ex} + \bigg(\max\lbrace \alpha,1+\nicefrac{1}{\alpha}\rbrace -Q(\alpha,\beta)\Big(\frac{1}{m}+\frac{1}{2}-\frac{1}{2m}\Big)\bigg) \sum_j w_j p_j^* \\
            &\stackrel{\eqref{eq: Eastman}}{\leq} \left \{ 
                \begin{array}{ll}
                    OPT_m \cdot Q(\alpha,\beta), & \textup{if }\max\lbrace \alpha,1+\nicefrac{1}{\alpha}\rbrace \leq Q(\alpha,\beta)\big(\frac{m+1}{2m}\big), \\
                    OPT_m \cdot \left( \left(\frac{1}{2}-\frac{1}{2m}\right)Q(\alpha,\beta)+\max\lbrace \alpha,1+\nicefrac{1}{\alpha}\rbrace\right), & \textup{else} .
                \end{array}\right. 
		\end{aligned}
	\end{equation*}
    For the single-machine parameters $\hat{\alpha}=\varphi$ and $\hat{\beta}=\nicefrac{\left(\varphi+\sqrt{\varphi\left(\varphi+ 4\right)}\right)}{2}$ from Theorem~\ref{theorem competitive ratio beta}, we obtain the single-machine bound~$Q(\hat{\alpha},\hat{\beta})$ for $m=1,2$ from the first case above. For $m\geq 3$, the stated bound follows from the second case above and since $-\nicefrac{Q(\hat{\alpha},\hat{\beta})}{2}<-1.1582$. It is monotonically increasing in~$m$ and converges to $\nicefrac{Q(\hat{\alpha},\hat{\beta})}{2}+\varphi \leq 2.7763$ for $m\rightarrow\infty$.
    Note that when using exactly the above analysis for the \textsc{Weighted} $\left(\alpha,\beta\right)$-\textsc{PCP}, no parameter combination that is constant for all $m$ can asymptotically be better than~$\hat{\alpha}$ and~$\hat{\beta}$, since $\nicefrac{Q(\alpha,\beta)}{2}+ \max\lbrace \alpha,1+\nicefrac{1}{\alpha}\rbrace\geq \nicefrac{Q(\hat{\alpha},\hat{\beta})}{2} + \varphi$ for all $\alpha\geq 1$ and~$\beta>0$ according to Theorem~\ref{theorem competitive ratio beta} and Lemma~\ref{lemma: single job testing}. 
    
    Note that the bound is slightly improved for every fixed $m\geq 2$ compared to the conference version~\cite{buld2025scheduling} proving a competitive ratio of $\frac{1}{2}Q(\hat{\alpha},\hat{\beta})+\varphi + \nicefrac{1}{m} \left( \nicefrac{Q(\hat{\alpha},\hat{\beta})}{2} - \varphi \right) \leq 2.7763 - \nicefrac{0.4597}{m}$. The difference lies in the second inequality above. In this version, we bound the total pairwise contributions $\sum_j w_j \sum_{k\not= j} c(k,j)$ using the factor $Q(\alpha,\beta)$ instead of the algorithmic cost $\sum_j w_j \sum_{k} c(k,j)$ on a single machine, compare with the proof of Theorem~\ref{theorem competitive ratio beta}.
 \end{proof}

Following essentially the same steps, we can use this analysis to improve upon the best-known bound for randomized algorithms on parallel machines, $2.8307$, from \cite{gong2024approximation} for the unweighted case, even for job-dependent weights. Instead of a threshold parameter $\alpha$, we use the probability function~$f$ of Subsection~\ref{subsec: best-known randomized algo}.

\begin{theorem}
    \textsc{Weighted} $(f,2)$-PCP combined with list scheduling as in Algorithm~\ref{algo: parallel machines} has a competitive ratio of at most $\nicefrac{3(7+3\sqrt{6})}{20}\leq 2.1523$ for $m\leq 3$ and $\left( 7 - \nicefrac{3}{m}\right) \nicefrac{\left( 7 + 3\sqrt{6}\right)}{40}\leq 2.5110-\nicefrac{1.0761}{m}$ for $m\geq 4$.
\end{theorem}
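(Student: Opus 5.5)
The plan is to follow the proof of Theorem~\ref{theorem: parallel machines proof} almost verbatim, working with expectations and conditioning on the random testing decisions. Abbreviate $R := \nicefrac{3(7+3\sqrt{6})}{20}$. The argument has two ingredients, and then the same Eastman-type bookkeeping as before.

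\emph{First ingredient: the list-scheduling bound holds pathwise.} Fix any realization of the coin flips. Algorithm~\ref{algo: parallel machines} then behaves like the deterministic list-scheduling algorithm with a fixed set of tested jobs, and its priority list follows the single-machine order of \textsc{Weighted} $(f,2)$-\textsc{PCP} for that realization. The proof of Lemma~\ref{lemma: from single to parallel machines} uses only that order and the testing decisions, not the threshold rule. So inequality~\eqref{eq: C_j from single to parallel machines} holds pathwise, and taking expectations gives
\begin{equation*}
E\Big[\sum_j w_j C_j\Big] \le \frac{1}{m}\sum_j E\Big[\sum_{k\neq j} w_j\, c(k,j)\Big] + \sum_j w_j E[p_j^A] .
\end{equation*}

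\emph{Second ingredient: the two expectation bounds.} For the pairwise term, group it by optimal-order pairs $k<_o j$ as in the proof of Theorem~\ref{theorem: randomized bound single machine}. Lemma~\ref{lemma: test amortization} with $\beta=2$ bounds $E[w_j c(k,j)+w_k c(j,k)]$ by $w_j h(r_k)$, and the case analysis there gives $h(r_k)\le R\,p_k^*$. Hence
\begin{equation*}
\sum_j E\Big[\sum_{k\ne j} w_j c(k,j)\Big]\le R\sum_j w_j\sum_{k<_o j}p_k^* = R\,\big(OPT_1-\textstyle\sum_j w_jp_j^*\big).
\end{equation*}
For the self term, I need a bound sharper than $R$. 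Write $E[p_j^A]=u_j(1-f(r_j))+(t_j+p_j)f(r_j)$. Since $\tfrac32(t_j+p_j)\le\max\{2t_j+p_j,\tfrac32(t_j+p_j)\}$, this expression is at most $\tfrac23 h(r_j)$, and therefore
\begin{equation*}
E[p_j^A]\le \tfrac23 R\,p_j^* .
\end{equation*}
This factor $\tfrac23 R$ plays the role of $\max\{\alpha,1+\nicefrac1\alpha\}$ in the deterministic proof.

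\emph{Combining via~\eqref{eq: Eastman}.} Substituting both bounds, I rewrite the right-hand side exactly as in Theorem~\ref{theorem: parallel machines proof}:
\begin{equation*}
R\Big(\tfrac1m OPT_1+\big(\tfrac12-\tfrac1{2m}\big)\textstyle\sum_j w_jp_j^*\Big)+\Big(\tfrac23R-R\tfrac{m+1}{2m}\Big)\textstyle\sum_j w_jp_j^* .
\end{equation*}
The first bracket is at most $OPT_m$ by~\eqref{eq: Eastman}. The sign of the second coefficient decides the case.
\begin{itemize}
\item If $\tfrac23\le\tfrac{m+1}{2m}$, i.e.\ $m\le 3$, the second term is nonpositive and can be dropped, giving ratio $R$.
\item If $m\ge4$, bound $\sum_j w_jp_j^*\le OPT_m$ via the first part of~\eqref{eq: Eastman}. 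The ratio becomes $\big(\tfrac12-\tfrac1{2m}\big)R+\tfrac23R=R\cdot\tfrac{7-3/m}{6}=\big(7-\tfrac3m\big)\tfrac{7+3\sqrt6}{40}$. Numerically this gives the stated $2.5110-\nicefrac{1.0761}{m}$ after rounding, since $R/2\approx1.07615$.
\end{itemize}

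\emph{Main obstacle.} The delicate step is the first ingredient: making sure Lemma~\ref{lemma: from single to parallel machines} really holds pathwise for the randomized testing decisions. In particular, the single-machine contributions $c(k,j)$ it refers to must be exactly those bounded in Theorem~\ref{theorem: randomized bound single machine} under the same realization and tie-breaking. I would check this by fixing the realization first, so that everything reduces to the deterministic argument. The only remaining new ingredient is the factor $\tfrac23$ in the self-term bound, and it is this factor that produces the threshold $m\le3$.
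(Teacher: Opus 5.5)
Your proposal is correct and follows the paper's proof. The steps match: you apply Lemma~\ref{lemma: from single to parallel machines} pathwise and take expectations, bound the pairwise term by $R\,(OPT_1-\sum_j w_jp_j^*)$ with $R=\nicefrac{3(7+3\sqrt6)}{20}$, use the self-term bound $E[p_j^A]\le \frac{7+3\sqrt6}{10}\,p_j^*=\frac23 R\,p_j^*$, and finish with the same two-case Eastman bookkeeping, which gives the threshold $m\le 3$. The one difference is that you get the self-term bound in one line from $\frac32 E[p_j^A]\le h(r_j)$, whereas the paper redoes a case analysis on $r_j$ (maximizing $1+\nicefrac{f(r_j)}{r_j}$ and $f(r_j)+r_j(1-f(r_j))$); both give the same constant.
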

\begin{proof}
    The proof follows the same framework as in Theorem~\ref{theorem: parallel machines proof}. We replace $\sum_j w_j C_j $ by $E\big[\sum_j w_j C_j \big]$ and $p_j^A$ by $E\big[p_j^A\big]$, apply the bound from Equation~\eqref{eq: C_j from single to parallel machines} on $C_j$ inherited from the list-scheduling framework in each realization, and make use of the linearity of expectation in a first step: 
    \begin{equation*}
        	E\big[\sum_j w_j C_j \big] \stackrel{\eqref{eq: C_j from single to parallel machines}}{\leq} \frac{1}{m} E\big[\sum_j w_j \sum_{k\not= j} c(k,j) \big]+ \sum_j w_j E\big[p_j^A \big]
    \end{equation*}
    In a second step, one uses the upper bound for \textsc{Weighted} $(f,2)$-PCP on a single machine from Theorem~\ref{theorem: randomized bound single machine}, which is $\nicefrac{3(7+3\sqrt{6})}{20}\cdot \sum_j w_j \sum_{k\not= j} c^*(k,j)$ for the first expectation. Further, one calculates the job-specific bound $E\left[ p_j^A\right]\leq  p^*_j \cdot \nicefrac{(7+3\sqrt{6})}{10}$ for each job $j$ and for~$f$ as in~\eqref{eq: prob function} similar to the proof of Theorem~\ref{theorem: randomized bound single machine}. To this end, consider the following case distinction on the ranges of $r_j$ to give an upper bound on $E\left[p_j^A\right] = u_j \left( 1- f\left(r_j\right)\right) +\left(t_j+p_j\right)f\left(r_j\right)$. For $r_j<1$, we obtain $E\left[ p_j^A\right]=u_j=p_j^*$, and for $r_j>3$, $E\left[ p_j^A\right]=t_j + p_j = \nicefrac{u_j}{r_j}+p_j \leq \nicefrac{4}{3}\cdot u_j$, thus $E\left[ p_j^A\right] \leq \nicefrac{4}{3}\cdot p_j^*$. For $1\leq r_j\leq 3$, we obtain 
    \begin{equation*}
            \frac{E\left[ p_j^A\right]}{u_j} \leq f(r_j)\Big(\frac{1}{r_j}+1\Big) +\big(1-f(r_j)\big)\leq 1+\frac{f(r_j)}{r_j}\leq \frac{(7+3\sqrt{6})}{10} ,
    \end{equation*}
    since $1+\nicefrac{f(r_j)}{r_j}$ attains its maximum at $r_j=1+\sqrt{\nicefrac{2}{3}}$ with the above value, on the one hand and 
    \begin{equation*}
        \frac{E\left[ p_j^A\right]}{t_j+p_j} \leq f(r_j) + r_j\left(1-f(r_j)\right) \leq \frac{6}{5} ,
    \end{equation*}
    since the expression is maximized at $r_j=\nicefrac{3}{2}$ with value $\nicefrac{6}{5}$, on the other hand. In total, we have the bound $\nicefrac{E\left[ p_j^A\right]}{p_j^*} \leq \nicefrac{(7+3\sqrt{6})}{10}\approx 1.4348$. Substituting the bounds into the arguments as in the proof of Theorem~\ref{theorem: parallel machines proof}, we finally obtain $E\left[\sum_j w_j C_j\right]\leq \nicefrac{3(7+3\sqrt{6})}{20}\cdot OPT_m$, if $\nicefrac{(7+3\sqrt{6})}{10}\leq\nicefrac{3(7+3\sqrt{6})}{20}\cdot(\nicefrac{(m+1)}{2m})$, i.e. $m\leq 3$, and for each $m\geq 4$:
    \begin{equation*}
    \begin{aligned}
        E\Big[\sum_j w_j C_j\Big] &\leq OPT_m \frac{\left( 7 + 3\sqrt{6}\right)}{10}\Big( \frac{3}{2}\Big(\frac{1}{2}-\frac{1}{2m}\Big) + 1\Big) \\
        &\leq OPT_m \frac{\left( 7 + 3\sqrt{6}\right)}{40}\Big( 7 -\frac{3}{m}\Big)\leq OPT_m \cdot\Big(2.5110-\frac{1.0761}{m}\Big) \ .
    \end{aligned}
    \end{equation*}

    The conference version proved a competitive ratio of at most $\left( 7 - \nicefrac{1}{m}\right) \nicefrac{\left( 7 + 3\sqrt{6}\right)}{40}\leq 2.5110-\nicefrac{0.3587}{m}$ for each $m\geq 1$. In this version, we also give an improved bound for each fixed $m\geq 2$.
\end{proof}

\section{Conclusion}
In this paper, we give the first algorithms with constant competitive ratios for scheduling with testing on a single machine to minimize the total \emph{weighted} completion time objective. These algorithms are natural extensions of algorithms that had earlier been proposed for the unweighted case \cite{liu2023power}. We present upper bounds of $2.3166$ and $2.1523$ on a single machine, and of $2.7763$ and $2.5110$ on identical parallel machines, for the competitive ratio of a deterministic and a randomized algorithm, respectively.

An interesting open question is whether there is a gap between the problem without job-dependent weights and the one with them. While our results show that the currently best-known upper bounds established for the unweighted setting can be preserved, no separation between those cases is known at the moment. Hence, this remains a natural direction for future work, alongside tightening the gaps between the lower and upper bounds for this model in general.

\section*{Acknowledgments} Felix Buld was funded by the Deutsche Forschungsgemeinschaft (DFG, German Research Foundation) - GRK 2201/2 - Projektnummer 277991500. The authors thank Simon Gmeiner for helpful discussions and ideas on the $WGR^3$ rule from Section~\ref{section: From Preemptive to Test Preemptive Scheduling}. In addition, the authors thank the anonymous reviewers of IJTCS-FAW 2025 for their valuable feedback and suggestions.

\bibliographystyle{abbrvurl}
\bibliography{literature_lv}

\appendix

\section{Unboundedness of Previous Approaches Considering Job-Dependent Weights}
\label{subsec: unboundedness zhang}
For scheduling with testing to minimize the total \emph{weighted} completion time in the adversarial model, some algorithms were proposed by \cite{zhang2023scheduling}. They considered the restricted problem with uniform testing times $t_j=1$ and uniform upper bounds $u_j=u$, where jobs differ only by their weights. Three algorithms to approach the job-dependent weights were outlined:
\begin{enumerate}
    \item \emph{Greedy}: Sort jobs according to non-increasing weights $w_1 \geq w_2 \geq \ldots \geq w_n$. In this order, test a job and execute it immediately.
    \item \emph{Delay-All}: Test all jobs. Afterwards, execute jobs in non-decreasing order of their weighted processing times~$\nicefrac{p_1}{w_1}\leq \nicefrac{p_2}{w_2}\leq \ldots \leq \nicefrac{p_n}{w_n}$.
    \item Intermediate strategy \emph{L-Delay-All}: Group jobs in batches of same weight. Then apply \emph{Delay-All} individually to each batch in order of non-increasing weights.
\end{enumerate}
They provide upper bounds on the competitive ratios of these algorithms, parameterized by input values such as the upper bound~$u$ or the maximal weight~$w_{\max}$. It also contains lower bounds of $2.4$ for the second and $3.4$ for the third strategy. However, all those strategies have unbounded competitive ratios, even for uniform testing times and uniform upper bounds. We construct instance families that are structurally similar to theirs to show this. 
    \begin{enumerate}
    \item Algorithms \emph{Greedy} and \emph{L-Delay-All}:
        Consider instances with $n\geq 2$ \emph{small} jobs, one \emph{large} job and $\varepsilon>0$ small:
        \begin{equation*}
            \begin{aligned}
                &\textup{(\emph{short}) } &&t_j=1, u_j= n^2, p_j=0, \hspace{1ex} w_j=1 &&\textup{ for } j=1,\ldots,n,\\
                 &\textup{(\emph{long}) } &&t_j=1,u_j= n^2, p_j=n^2, w_j=1+\varepsilon && \textup{ for } j=n+1\ .
            \end{aligned}
        \end{equation*}
        
        Since the first algorithm focuses only on the weight, it processes the long job immediately after testing, causing a large delay for all short jobs. In contrast, the optimal offline algorithm does not test the job and schedules it last. Hence, it follows
        \begin{equation*}
            \frac{ALG}{OPT}=\frac{(1+\varepsilon)(1+n^2) + (1+n^2)n + \frac{n(n+1)}{2}}{\frac{n(n+1)}{2} + (1+\varepsilon)(n+n^2)} ,
        \end{equation*}
        which is unbounded for $n\rightarrow \infty$. The objective value for these instances, and hence the competitive ratio, is even worse for L-Delay-All, since the small jobs are delayed after their tests as well.
        
    \item  Algorithm \emph{Delay-All}: Consider instances with~$n$ \emph{light} jobs and one \emph{heavy} job:
        \begin{equation*}
            \begin{aligned}
                &\textup{(\emph{light}) } &&t_j=1, u_j= 2, p_j=0, w_j=1 &&\textup{ for } j=1,\ldots,n,\\
                 &\textup{(\emph{heavy}) } &&t_j=1,u_j= 2, p_j=0, w_j=n^2 && \textup{ for } j=n+1\ . 
            \end{aligned}
        \end{equation*}
        \emph{Delay-All} tests all jobs first and then executes them. Especially the heavy job is not completed immediately after testing. The optimal offline algorithm tests and executes each job immediately after the test, starting with the heavy one. Hence, it follows:
        \begin{equation*}
            \frac{ALG}{OPT}=\frac{(n+1)(n^2+n)}{n^2+n+\frac{n(n+1)}{2}} ,
        \end{equation*}
        which is unbounded for $n\rightarrow \infty$. Note that the \emph{Delay-All} algorithm may also be viewed as \textsc{Weighted} $(0,0)$-\textsc{SORT}, see Table~\ref{table: beta results}.
    \end{enumerate}

The instances constructed in this section show that the approaches of \cite{zhang2023scheduling} that focus on only one aspect, such as ordering by weights or early testing for fast information gain, are not sufficient to achieve constant competitive ratios. This even holds in the highly restrictive setting with uniform testing times and uniform upper bounds. In contrast, our approaches integrate testing decisions based on thresholds and priority rules that combine testing and processing times with job-dependent weights, and achieve constant competitive ratios in the general setting. 

\section{Upper Bound of the \textsc{Weighted} $(\alpha,\beta)$-\textsc{SORT}}
\label{seq: upper bound SORT}
The analysis of the $(\alpha,\beta)$-\textsc{SORT} in \cite{liu2023power} can be extended to the weighted version analogously as for the \textsc{Weighted} $(\alpha,\beta)$-\textsc{PCP} in Lemma~\ref{lemma: test amortization} and Theorem~\ref{theorem competitive ratio beta}  giving rise to the \textsc{Weighted} $(\alpha,\beta)$-\textsc{SORT}. We illustrate this in this section.

\begin{lemma} 
\label{lemma: SORT test amortization}
Consider jobs $j\not = k$, and parameters $\alpha\geq 1$ and $\beta >0$. 
\\ If the \textsc{Weighted} $(\alpha,\beta)$-\textsc{SORT} does not test job $k$, it holds
\begin{equation*}
     w_j \cdot c(k,j)+w_k \cdot c(j,k) \leq w_j \cdot \Big(1+ \frac{1}{\beta}\Big) u_k \leq w_j \cdot \alpha\cdot \Big(1+ \frac{1}{\beta}\Big) \cdot p^*_k \ .
\end{equation*}
If the \textsc{Weighted} $(\alpha,\beta)$-\textsc{SORT} does test job $k$, it holds
\begin{equation*}
    \begin{aligned}
        w_j \cdot c(k,j)+w_k \cdot c(j,k) &\leq w_j \cdot \max\bigg\lbrace \Big( 1+\beta\Big) t_k, 2t_k + p_k, t_k+\Big(1+\frac{1}{\beta}\Big)p_k\bigg\rbrace \\&\leq w_j \cdot \max\bigg\lbrace 1+\beta, 2, 1+\frac{2}{\alpha}, 1+\frac{1}{\alpha} + \frac{1}{\beta}\bigg\rbrace \cdot p^*_k \ .
    \end{aligned}
\end{equation*}
\end{lemma}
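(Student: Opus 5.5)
The approach mirrors the proof of Lemma~\ref{lemma: test amortization}. The only change is in the priority values of tested-execution tasks, which in \textsc{Weighted} $(\alpha,\beta)$-\textsc{SORT} are $\nicefrac{p_j}{w_j}$ instead of $\nicefrac{(t_j+p_j)}{w_j}$. Fix $k<_o j$, so that the optimum pays $w_j p^*_k$ for this pair. We then enumerate all task orders for the two jobs in the notation $\tau,\pi,\upsilon$. For each order, we bound $w_j c(k,j)+w_k c(j,k)$ using only the inequality between the priority values of the two tasks of $j$ and $k$ that the algorithm compared. Two facts hold independently of tie-breaking: every task is started in priority order among the available tasks, and a $\pi$-task only becomes available after its $\tau$-task.

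\textbf{Case 1: $k$ untested.} The rows are the same five orders as in Table~\ref{table:no test on k}. Orders $[\upsilon_k,\cdot]$ give $w_j u_k$. The order $[\upsilon_j,\upsilon_k]$ uses $\nicefrac{u_j}{w_j}\le \nicefrac{u_k}{w_k}$. The order $[\tau_j,\upsilon_k,\pi_j]$ gives $w_j u_k+w_k t_j\le (1+\nicefrac1\beta)w_j u_k$ via $\nicefrac{\beta t_j}{w_j}\le \nicefrac{u_k}{w_k}$. For $[\tau_j,\pi_j,\upsilon_k]$, the cost is $w_k(t_j+p_j)$. Here we use that $\tau_j$ was chosen before $\upsilon_k$, so $w_k t_j\le \nicefrac{w_j u_k}{\beta}$. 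Likewise $\pi_j$ was chosen before $\upsilon_k$, so $w_k p_j\le w_j u_k$. Together these give $(1+\nicefrac1\beta)w_j u_k$. Finally, $u_k\le \alpha p^*_k$ by Lemma~\ref{lemma: single job testing}.

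\textbf{Case 2: $k$ tested.} These are the nine orders of Table~\ref{table:test on k}, now with $\pi$-priorities $\nicefrac{p}{w}$.
\begin{itemize}
\item $[\tau_k,\pi_k,\cdot]$ costs $w_j(t_k+p_k)$.
\item $[\tau_k,\upsilon_j,\pi_k]$ costs $w_j t_k+w_k u_j\le w_j(t_k+p_k)$, using $\nicefrac{u_j}{w_j}\le\nicefrac{p_k}{w_k}$.
\item $[\upsilon_j,\tau_k,\pi_k]$ costs $w_k u_j\le \beta w_j t_k$.
\item $[\tau_k,\tau_j,\pi_k,\pi_j]$ costs $w_j(t_k+p_k)+w_k t_j\le w_j t_k+(1+\nicefrac1\beta)w_j p_k$, using $\nicefrac{\beta t_j}{w_j}\le\nicefrac{p_k}{w_k}$.
\item $[\tau_k,\tau_j,\pi_j,\pi_k]$ costs $w_j t_k+w_k(t_j+p_j)$. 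Here $\beta w_k t_j\le w_j p_k$ and $w_k p_j\le w_j p_k$, so this is at most $w_j t_k+(1+\nicefrac1\beta)w_j p_k$.
\item $[\tau_j,\tau_k,\pi_k,\pi_j]$ costs $w_j(t_k+p_k)+w_k t_j\le w_j(2t_k+p_k)$, using $t_j/w_j\le t_k/w_k$.
\item $[\tau_j,\tau_k,\pi_j,\pi_k]$ costs $w_j t_k+w_k t_j+w_k p_j\le w_j(2t_k+p_k)$.
\item $[\tau_j,\pi_j,\tau_k,\pi_k]$ costs $w_k(t_j+p_j)$. Here $w_k t_j\le w_j t_k$ from $\tau$ versus $\tau$, and $w_k p_j\le \beta w_j t_k$ from $\pi_j$ chosen before $\tau_k$, giving $(1+\beta)w_j t_k$.
\end{itemize}
This order is the source of the new $(1+\beta)t_k$ term, replacing $\beta t_k$ in the PCP case. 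The maximum of all rows is $w_j\max\{(1+\beta)t_k,\,2t_k+p_k,\,t_k+(1+\nicefrac1\beta)p_k\}$.

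For the second inequality we bound each term by $p^*_k$, using $t_k\le\nicefrac{u_k}{\alpha}$ and Lemma~\ref{lemma: single job testing}.
\begin{itemize}
\item $(1+\beta)t_k\le(1+\beta)p^*_k$, since $t_k\le\min\{t_k+p_k,u_k\}$.
\item $2t_k+p_k\le\max\{2,1+\nicefrac2\alpha\}p^*_k$, exactly as in Lemma~\ref{lemma: test amortization}.
\item For $t_k+(1+\nicefrac1\beta)p_k$: if $p^*_k=t_k+p_k$, it is at most $(1+\nicefrac1\beta)p^*_k$. If $p^*_k=u_k$, then $p_k\le u_k$ and $t_k\le \nicefrac{u_k}{\alpha}$, giving $1+\nicefrac1\alpha+\nicefrac1\beta$. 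The first value is dominated by the second.
\end{itemize}

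\textbf{Main obstacle.} The expected difficulty is the case bookkeeping. We must make sure each row uses a comparison the algorithm actually made. In particular, in SORT a $\pi$-task may have smaller priority than its own $\tau$-task, so the set of possible interleavings has to be checked, and the mixed rows need two separate comparisons, one for the test part and one for the execution part.
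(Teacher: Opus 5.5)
Your proposal is correct and follows essentially the same route as the paper: the same exhaustive enumeration of the five task orders when $k$ is untested and the nine orders when $k$ is tested, with the same priority comparisons in every row. This includes the two-comparison rows $[\tau_j,\pi_j,\upsilon_k]$, $[\tau_k,\tau_j,\pi_j,\pi_k]$, $[\tau_j,\tau_k,\pi_j,\pi_k]$ and $[\tau_j,\pi_j,\tau_k,\pi_k]$, and the same case split on $p^*_k\in\{t_k+p_k,u_k\}$ for the second inequality.
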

\begin{proof}
    The proof naturally extends the proof of~\cite{liu2023power} to job-dependent weights and follows the same structure as the proof of Lemma~\ref{lemma: test amortization} using an exhaustive comparison tree as in \cite{albers2021explorable,dogeas2024obligatory}.
    
    Assume that $k<_o j$ for a given pair of jobs $k,j$. We bound the weighted sum of mutual contributions $w_j \cdot c(k,j)+w_k \cdot c(j,k)$ in the algorithm against the value in the optimal offline solution, i.e., $w_j \cdot p^*_k$.
    We use the notation from \cite{dogeas2024obligatory} as extended in Lemma~\ref{lemma: test amortization} and denote the testing task by $\tau_j$, the tested-execution task by $\pi_j$, and the untested-execution task of job~$j$ by $\upsilon_j$. An ordered list, such as~$\left[ \upsilon_k, \tau_j,\pi_j\right]$ describes the testing and ordering decisions of the algorithm for jobs $k,j$ compactly. 
    
    First, we consider the cases in which the algorithm does not test job~$k$, see Table~\ref{table:SORT no test on k}. Overall, it follows that $w_j\cdot c(k,j)+ w_k \cdot c(j,k) \leq w_j \cdot (1+ \nicefrac{1}{\beta}) u_k$ in this case. Bounding $u_k\leq\alpha p_k^*$ by Lemma~\ref{lemma: single job testing} yields the stated bound.

{\small 
        \begin{table}[htbp]
                \centering
                \caption{Case $p_k^A=u_k$\label{table:SORT no test on k}}
                \vspace{2ex}
                \begin{tabular}{c|c|l|c}
                     &Order of Tasks & $w_j\cdot c(k,j)+w_k \cdot c(j,k)$ & Used Inequalities \\
                     \hline
                     \multirow{2}{*}{$p_j^A=u_j$}
                     & $\left[ \upsilon_k,\upsilon_j\right]$ & $ w_j\cdot u_k$ & \\
                     
                     &$\left[\upsilon_j,\upsilon_k\right]$  & $ w_k\cdot u_j \leq w_j\cdot u_k$ & $\frac{u_j}{w_j}\leq \frac{u_k}{w_k}$\\
                     \hline
                     \multirow{3}{*}{$p_j^A=t_j + p_j$} 
                     & $\left[ \upsilon_k,\tau_j, \pi_j\right]$ & $ w_j\cdot u_k$& \\
                     
                     & $\left[ \tau_j,\upsilon_k, \pi_j\right]$ & $ w_j\cdot u_k + w_k\cdot t_j \leq \big( 1+ \frac{1}{\beta} \big) w_j\cdot u_k$& $\frac{\beta t_j}{w_j}\leq\frac{u_k}{w_k}$\\
                     
                     & $\left[ \tau_j, \pi_j ,\upsilon_k\right]$ & $ w_k\cdot (t_j + p_j) \le \big( 1+ \frac{1}{\beta} \big) w_j\cdot u_k $ & $\frac{\beta t_j}{w_j}\leq\frac{u_k}{w_k}$; $\frac{p_j}{w_j}\leq\frac{u_k}{w_k}$
                \end{tabular}
            \end{table}
            }

  Secondly, we consider the cases when the algorithm tests job $k$, see Table~\ref{table:SORT test on k}. We obtain $w_j  c(k,j) + w_k  c(j,k) \leq w_j \cdot \max\left\lbrace \left( 1+\beta\right) t_k, 2t_k + p_k, t_k+\left(1+\nicefrac{1}{\beta}\right)p_k\right\rbrace$ in these cases, as stated above. We further bound the expressions with respect to the optimal running time $p^*_k$. First, we have~$\left(1+\beta\right) t_k \leq \left(1+\beta\right) p^*_k$, since~$p^*_k=\min\lbrace t_k+p_k, u_k\rbrace$ and $t_k\leq \alpha t_k \leq u_k$. Secondly, if $p^*_k=t_k+p_k$, we have $2t_k + p_k\leq 2t_k + 2p_k = 2p^*_k$. If, however, $p^*_k=u_k$, we have $2t_k + p_k = t_k + t_k+ p_k \leq \nicefrac{p^*_k}{\alpha} + \left(1+\nicefrac{1}{\alpha}\right)
  p^*_k= \left(1+\nicefrac{2}{\alpha}\right)p^*_k$ by Lemma~\ref{lemma: single job testing} and $t_k\leq \nicefrac{u_k}{\alpha}$ since $k$ was tested.
  Lastly, if $p^*_k=t_k+p_k$, we have $t_k+\left(1+\nicefrac{1}{\beta}\right)
    p_k\leq \left(1+\nicefrac{1}{\beta}\right)p^*_k$. And, if $p^*_k=u_k$, we have $t_k+\left(1+\nicefrac{1}{\beta}\right)
    p_k = t_k + p_k + \nicefrac{1}{\beta}p_k \leq \left(1+\nicefrac{1}{\alpha}+\nicefrac{1}{\beta}\right)p^*_k$ by Lemma~\ref{lemma: single job testing} and $p_k\leq u_k$. These calculations prove the second inequality in the statement of the Lemma.
{\small
    \begin{table}[htbp]
        \centering
        \caption{Case $p_k^A= t_k + p_k$\label{table:SORT test on k}}
        \vspace{2ex}
        \begin{tabular}{c|c|l|c}
             &Order of Tasks & $w_j\cdot c(k,j)+w_k \cdot c(j,k)$ & Used Inequalities \\
             \hline
            \multirow{3}{*}{\rotatebox[origin=c]{90}{$p_j^A=u_j$}}
            &$\left[\tau_k, \pi_k, \upsilon_j \right]$ & $w_j\cdot (t_k+p_k)$ & \\
            &$\left[ \tau_k,\upsilon_j,\pi_k\right]$ & $ w_j\cdot t_k + w_k\cdot u_j \leq w_j\cdot (t_k+ p_k)$ & $\frac{u_j}{w_j}\leq\frac{p_k}{w_k}$\\
            &$\left[ \upsilon_j, \tau_k, \pi_k \right]$ & $ w_k\cdot u_j \leq \beta\cdot w_j \cdot t_k$ & $\frac{u_j}{w_j}\leq \frac{\beta t_k}{w_k}$\\
            \hline
            \multirow{6}{*}{\rotatebox[origin=c]{90}{$p_j^A=t_j + p_j$}}
                 &$\left[\tau_k, \pi_k, \tau_j, \pi_j \right]$ & $ w_j\cdot (t_k+p_k)$& \\
                 
                 &$\left[\tau_k,  \tau_j, \pi_k, \pi_j \right]$ & $ w_j\cdot (t_k+p_k) + w_k\cdot t_j \leq w_j\cdot \big(t_k+\big(1+\frac{1}{\beta}\big)p_k\big)$ & $\frac{\beta t_j}{w_j}\leq\frac{p_k}{w_k}$\\ 
                 
                 &$\left[\tau_k,  \tau_j, \pi_j, \pi_k \right]$ &$ w_j\cdot t_k + w_k\cdot (t_j+p_j) \leq w_j\cdot \big(t_k+\big(1+\frac{1}{\beta}\big)p_k\big)$ & $\frac{\beta t_j}{w_j}\leq\frac{p_k}{w_k};\frac{p_j}{w_j}\leq\frac{p_k}{w_k}$\\
                 
                 &$\left[\tau_j, \tau_k, \pi_k, \pi_j \right]$ & $ w_j\cdot (t_k+p_k) + w_k\cdot t_j \leq w_j\cdot \left(2t_k+p_k\right)$ & $\frac{\beta t_j}{w_j}\leq\frac{\beta t_k}{w_k}$ \\
                 
                 &$\left[\tau_j, \tau_k, \pi_j, \pi_k \right]$ & $ w_j\cdot t_k + w_k\cdot (t_j+p_j) \leq w_j\cdot (2t_k+ p_k)$& $\frac{\beta t_j}{w_j}\leq \frac{\beta t_k}{w_k}; \frac{p_j}{w_j}\leq \frac{p_k}{w_k}$\\
                 
                 &$\left[\tau_j, \pi_j , \tau_k, \pi_k \right]$ & $ w_k\cdot(t_j+p_j) \leq \left(1+\beta\right)\cdot w_j \cdot t_k$& $\frac{\beta t_j}{w_j}\leq \frac{\beta t_k}{w_k}; \frac{p_j}{w_j}\leq \frac{\beta t_k}{w_k}$
            \end{tabular}
        \end{table}
        }
\end{proof}

Note that in Lemma~\ref{lemma: SORT test amortization} some appearing terms in the case where a job $k$ is tested differ in comparison to Lemma~\ref{lemma: test amortization} for the \textsc{Weighted} $(\alpha,\beta)$-\textsc{PCP} because of the different priority values. Thus, the following theorem provides other optimal parameters for this expression as for the corresponding one in Theorem~\ref{theorem competitive ratio beta}. 

\begin{theorem}
\label{theorem SORT competitive ratio beta}
    The competitive ratio of the \textsc{Weighted} $(\alpha,\beta)$-\textsc{SORT} with $\alpha\geq 1$ and $\beta>0$ is at most 
    \begin{equation*}
    Q_2(\alpha,\beta) :=  \max\bigg\lbrace \alpha\Big(1+\frac{1}{\beta}\Big), 1+\beta, 2, 1+\frac{2}{\alpha}, 1+\frac{1}{\alpha} + \frac{1}{\beta}\bigg\rbrace \ .
    \end{equation*}
    The optimal choice of the parameters to minimize this term is $\alpha=\beta=\sqrt{2}$, resulting in a competitive ratio of at most $1+\beta \leq 2.4143$.

\end{theorem}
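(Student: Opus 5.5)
The plan mirrors the proof of Theorem~\ref{theorem competitive ratio beta}, with Lemma~\ref{lemma: SORT test amortization} in place of Lemma~\ref{lemma: test amortization}.

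\textbf{Step 1 (upper bound).} Fix an optimal order $<_o$ and write
\begin{equation*}
\sum_j w_j C_j^A=\sum_j\Big(w_j\, c(j,j)+\sum_{k<_o j}\big(w_j\, c(k,j)+w_k\, c(j,k)\big)\Big).
\end{equation*}
For $k<_o j$, Lemma~\ref{lemma: SORT test amortization} gives a bound on the pair term in either testing case for $k$. Taking the maximum over both cases bounds the pair term by $w_j\, Q_2(\alpha,\beta)\, p^*_k$.

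For the diagonal term we have $c(j,j)=p_j^A$. Lemma~\ref{lemma: single job testing} gives $p_j^A\le\max\{\alpha,1+\nicefrac{1}{\alpha}\}\,p_j^*$. This is at most $Q_2$, because $\alpha\le\alpha(1+\nicefrac{1}{\beta})$ and $1+\nicefrac{1}{\alpha}\le 1+\nicefrac{2}{\alpha}$. Summing over all $j$ and using $OPT=\sum_j\sum_{k\le_o j} w_j p^*_k$ yields $ALG\le Q_2(\alpha,\beta)\cdot OPT$.

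\textbf{Step 2 (optimizing the parameters).} The binding terms are $A=\alpha(1+\nicefrac{1}{\beta})$, $B=1+\beta$ and $C=1+\nicefrac{1}{\alpha}+\nicefrac{1}{\beta}$.

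For fixed $\beta$, $A$ increases in $\alpha$ and $C$ decreases in $\alpha$. Setting $A=C$ gives
\begin{equation*}
\alpha(1+\nicefrac{1}{\beta})-\nicefrac{1}{\alpha}=1+\nicefrac{1}{\beta},\quad\text{i.e.}\quad (1+\nicefrac{1}{\beta})(\alpha-1)=\nicefrac{1}{\alpha}.
\end{equation*}
Imposing also $A=B$ gives $\alpha(\beta+1)/\beta=1+\beta$, so $\alpha=\beta$. Substituting $\alpha=\beta$ into the first equation gives $(\beta+1)(\beta-1)/\beta=\nicefrac{1}{\beta}$, hence $\beta^2=2$ and $\alpha=\beta=\sqrt{2}$.

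At this point $A=B=C=1+\sqrt{2}$. The remaining terms are smaller: $2<1+\sqrt 2$, and $1+\nicefrac{2}{\sqrt 2}=1+\sqrt 2$ is only equal.

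\textbf{Step 3 (optimality of the choice).} We show that no other $(\alpha,\beta)$ does better. If $\beta<\sqrt2$ is required to keep $B\le 1+\sqrt2$, then $\max\{A,C\}$ at its best $\alpha$ exceeds $1+\sqrt2$. This follows because the minimum over $\alpha$ of $\max\{A,C\}$ is strictly decreasing in $\beta$, while $B$ is increasing in $\beta$. The two curves cross exactly at $\beta=\sqrt2$.

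Step 3 is the main obstacle, since it requires checking this monotonicity carefully; Steps 1 and 2 are direct. Finally, $1+\sqrt2\le 2.4143$.
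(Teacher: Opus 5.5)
Your proposal is correct and follows the paper's proof: the same pairwise decomposition, Lemma~\ref{lemma: SORT test amortization} for the pair terms, and $p_j^A\le\max\{\alpha,1+\nicefrac{1}{\alpha}\}\,p_j^*\le Q_2(\alpha,\beta)\,p_j^*$ for the diagonal terms. For optimality, the paper uses a shorter chain (improving requires $\beta<\sqrt2$ because of $1+\beta$, hence $\alpha<\sqrt2$ because of $\alpha(1+\nicefrac{1}{\beta})$, hence $1+\nicefrac{2}{\alpha}>1+\sqrt2$), and the step you flag as the main obstacle is immediate in any case: $A$ and $C$ are pointwise strictly decreasing in $\beta$, and the same chain also works with your $C=1+\nicefrac{1}{\alpha}+\nicefrac{1}{\beta}$ in place of $1+\nicefrac{2}{\alpha}$.
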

\begin{proof}
    Analogously to Theorem~\ref{theorem competitive ratio beta} we get
    \begin{equation*}
    \label{eq: competitive ratio SORT}
    \begin{aligned}
        \sum_{j=1}^{n} w_j\cdot C_j^A &= \sum_{j=1}^{n} \Big( w_j \cdot c(j,j) +  \sum_{k<_o j} \big( w_j \cdot c(k,j)+w_k \cdot c(j,k)\big)\Big)
        \\&\leq \sum_{j=1}^{n} \Big(w_j\cdot p_j^A + \sum_{k<_o j} \big(w_j \cdot Q_2(\alpha,\beta)\cdot p^*_k\big)\Big)
        \\&\leq Q_2(\alpha,\beta)\cdot \sum_{j=1}^{n} \sum_{k\leq_o j} w_j\cdot  p^*_k = Q_2(\alpha,\beta)\cdot OPT \ .
    \end{aligned}
    \end{equation*} 
    The first inequality follows from Lemma~\ref{lemma: SORT test amortization}, the second one holds, since $p_j^A\leq\max\lbrace \alpha, 1+\nicefrac{1}{\alpha}\rbrace\cdot p^*_j \leq Q_2(\alpha,\beta) \cdot p^*_j$. Minimizing this upper bound over $\alpha$ and $\beta$ yields $\alpha=\beta =\sqrt{2}$ as optimal values, cf.~\cite{liu2023power}. 
    For the choice of parameters $\alpha=\beta=\sqrt{2}$, all parametrized terms in the definition of $Q_{2}(\alpha,\beta)$ are equal to $1+\sqrt{2}$. To improve upon this, one would need $\beta<\sqrt{2}$ since we have the term $1+\beta$ in the maximum, hence $\alpha<\sqrt{2}$ since we have the term $\alpha(1+\nicefrac{1}{\beta})$, but then $1+\nicefrac{2}{\alpha}> 1+\sqrt{2}$.
\end{proof}

\begin{remark}
    The \textsc{Weighted} $(\alpha,\beta)$-\textsc{SORT} algorithm can be extended to identical parallel machines as the \textsc{Weighted} $(\alpha,\beta)$-\textsc{PCP} algorithm in Section~\ref{section: Parallel Machines}. Using the pairwise contribution bound from Theorem~\ref{theorem SORT competitive ratio beta} and applying the same framework as in Theorem~\ref{theorem: parallel machines proof}, one can not achieve a better bound than $\nicefrac{Q_2(\sqrt{2},\sqrt{2})}{2}+\max\lbrace\varphi,1+\nicefrac{1}{\varphi}\rbrace=\nicefrac{(1+\sqrt{2})}{2}+\varphi\geq 2.8251$.
\end{remark}

\end{document}